\documentclass{article}
\usepackage{spconf,amsmath,graphicx,hyperref}

\usepackage{color}
\usepackage{graphicx,subfigure,amsmath,amssymb,amsfonts,bm,epsfig,epsf,url,dsfont}
\usepackage{amsthm,mathrsfs}

\newtheorem{thm}{Theorem}[section]
\newtheorem{lem}[thm]{Lemma}

\newtheorem{proposition}[thm]{Proposition}

\usepackage{tikz}
\usepackage{bbm}
\usepackage[small,bf]{caption}
\usepackage{fancybox}
\usepackage{hyperref}
\usepackage{algorithm}
\usepackage{algorithmic}
\usepackage{verbatim}
\usepackage[titletoc,toc]{appendix}
\usepackage{relsize}

\usepackage{mathtools}

\usepackage{scalerel,stackengine}
\stackMath
\newcommand\reallywidehat[1]{%
\savestack{\tmpbox}{\stretchto{%
  \scaleto{%
    \scalerel*[\widthof{\ensuremath{#1}}]{\kern-.6pt\bigwedge\kern-.6pt}%
    {\rule[-\textheight/2]{1ex}{\textheight}}
  }{\textheight}%
}{0.5ex}}%
\stackon[1pt]{#1}{\tmpbox}%
}

\newcommand*{\rom}[1]{\expandafter\@slowromancap\romannumeral #1@}

\DeclareMathOperator*{\argmax}{arg\,max}

\usepackage{xspace}

\numberwithin{equation}{section}

\allowdisplaybreaks

\title{A Geometric Analysis of Initialization Bias in Spherical $K$-means in the Weak Signal Regime}
\name{Amnon Balanov and Tamir Bendory \thanks{This research is  supported in part by NSF-BSF under Grant 2024791, and in part by ISF under Grant 1046/26.}}

\address{Department of Electrical and Computer Engineering\\
Technion---Israel Institute of Technology, Haifa 3200003, Israel}

\begin{document}
%





\maketitle
\begin{abstract}
We study initialization bias in spherical $K$-means for weakly informative directional mixtures. We model the observations by a $K$-component von Mises-Fisher mixture with a small concentration parameter $\kappa$, corresponding to a high-dispersion regime in which the data provide limited information about the underlying directions. Our analysis begins with the limiting case $\kappa=0$ (corresponding to a uniform distribution over the sphere), where one population spherical $K$-means update is governed entirely by the Voronoi tessellation induced by the initialized templates. For uniformly random initializations in fixed dimension $d$, the updated templates become asymptotically aligned with their initial values as $K\to\infty$: the average squared geodesic error scales as $O(K^{-2/(d-1)})$, while the worst-case error is $O((\log K/K)^{2/(d-1)})$. We then show that, in the weak-signal regime of small positive $\kappa$, the population update remains an $O(\kappa)$ perturbation of this limiting map. Thus, in the weak-signal regime, spherical $K$-means can preserve initialization-induced structure despite the presence of a genuine but highly dispersed directional signal.
\end{abstract}

\begin{keywords}
Spherical $K$-means, initialization bias, confirmation bias, Voronoi tessellations.
\end{keywords}

\section{Introduction}

Confirmation bias is the tendency to interpret information in a way that supports prior beliefs while discounting incompatible evidence, a phenomenon well documented in psychology, medicine, and forensic science~\cite{klayman1995varieties,kassin2013forensic}. A related effect can arise in computational and statistical inference. Many latent-variable problems are nonconvex, so they are often solved by iterative algorithms whose output depends on initialization. In practice, these algorithms are frequently initialized with hypotheses intended to improve convergence. However, the initialization can act as a form of prior belief: it determines the first assignments or alignments, and these early decisions can influence subsequent updates. If the data are weakly informative or contain no actual signal, the algorithm may fail to correct the initialized hypotheses, and the final output may remain aligned with them.

Such effects have been discussed in structural biology and related reconstruction pipelines~\cite{henderson2013avoiding,shatsky2009method,sigworth1998maximum,balanov2026einstein,balanov2026confirmation}, and were recently formalized for Gaussian mixture models under pure-noise observations~\cite{balanov2025confirmation}. In that setting, even isotropic Gaussian noise can lead the standard $K$-means and expectation-maximization (EM) algorithms to produce estimates positively correlated with their initialization.

\noindent\textbf{Spherical K-means.} Here, we study the corresponding mechanism for normalized directional data, where both observations and candidate hypotheses are represented as directions (that is, points on the sphere). This setting arises in many estimation and clustering problems, and is particularly relevant in structural biology. In cryogenic electron microscopy (cryo-EM) and related reconstruction problems, iterative refinement procedures repeatedly compare particle images with class averages or projections of the current estimate using correlation-based scores, and then update the estimate from the resulting assignments and alignments~\cite{cheng2015primer,sigworth2010maximum, bendory2020single}. After normalization, these comparisons have a natural spherical geometry: each observation defines a direction, and the algorithm assigns it to the most aligned candidate. Spherical $K$-means provides a minimal model of this assignment-update mechanism~\cite{dhillon2001concept, banerjee2005clustering, hornik2012spherical}. It isolates the geometric question at the center of this work: when the data are weakly informative, or contain no directional signal, does the algorithm correct the initialization, or does the spherical update preserve it? 

\begin{figure*}[t!]
    \centering
    \includegraphics[width=0.67 \linewidth]{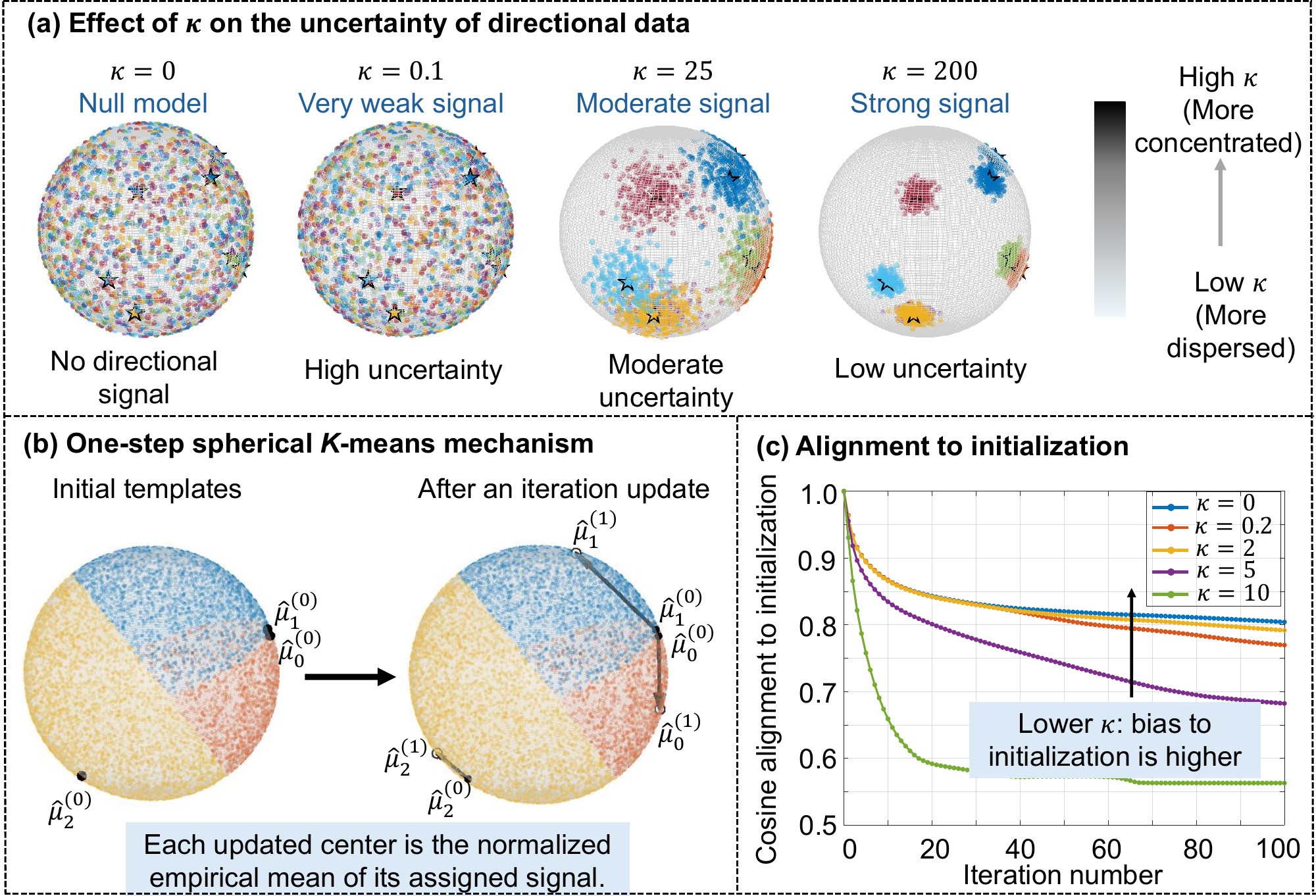}
    \caption{\textbf{Weak directional signal and initialization bias in spherical $K$-means.}
    (a) Illustration of a von Mises-Fisher mixture on the sphere. The concentration parameter $\kappa$ controls the amount of directional information around the mixture centers: $\kappa=0$ gives the uniform model, while small positive $\kappa$ gives weakly informative observations.
    (b) Given initial templates $\{{\mu}^{(0)}_\ell\}_{\ell=0}^{K-1}$, spherical $K$-means partitions the sphere into Voronoi cells and updates each template to the normalized mean of its assigned observations. As $K$ increases, the Voronoi cells become smaller, and their normalized centroids become increasingly aligned with the templates that generated them.
    (c) Numerical illustration, for $K=20$, of the mean cosine alignment with the initialization as a function of the iteration number. The alignment remains high for small $\kappa$, suggesting that spherical $K$-means can retain initialization-induced structure in weakly informative mixtures.}
    \label{fig:1} 
\end{figure*}

\noindent\textbf{The experiment.}
We analyze this question through the $K$-component von Mises-Fisher mixture model~\cite{fisher1953dispersion,mardia2009directional,banerjee2005clustering}, a standard probabilistic model for directional data. Each component is centered at a direction, while the concentration parameter $\kappa$ controls how strongly the observations are concentrated around the component directions. Our primary interest is the high-dispersion regime of small positive $\kappa$, in which the observations carry only weak directional information; see Figure~\ref{fig:1}(a).

To isolate the effect of initialization and obtain an analytically tractable reference point, we begin with the limiting case $\kappa=0$, in which the vMF mixture reduces to the uniform distribution on $\mathbb{S}^{d-1}$. In this limiting experiment, spherical $K$-means is initialized with $K$ template directions, interpreted as hypothesized component centers, and each observation is assigned to its most aligned template. Although the observations are rotationally invariant, the initialization breaks this symmetry by inducing a spherical Voronoi tessellation. In the population limit, each template is updated toward the normalized centroid of the Voronoi cell that it generates. Consequently, the output can remain strongly aligned with the initialized hypotheses rather than being determined by directional information in the observations. This provides a geometric mechanism for model bias and a baseline against which the behavior under small positive $\kappa$ can be analyzed.

\noindent\textbf{Main results and organization.}
Our main analysis focuses on a single population spherical $K$-means update, which is already sufficient to expose the fundamental initialization-induced mechanism under the uniform model. Section~\ref{subsec:connection_gaussian_confirmation_bias} establishes the structural properties of this one-step update and relates the spherical model to the previously studied Gaussian confirmation-bias model~\cite{balanov2025confirmation}. This Gaussian-spherical relation allows us to transfer several structural results to spherical $K$-means. In particular, we characterize the population update as the normalized Voronoi-cell centroid, and show asymptotic alignment when both the dimension and the number of templates grow under a near-orthogonality condition. These results provide the geometric and probabilistic foundations for our fixed-dimensional analysis.

Our main contribution, developed in Section~\ref{sec:fixed_dimension}, concerns the complementary regime in which the dimension $d$ is fixed and the number of postulated mixture components $K$ grows. For uniformly random initial templates, the induced Voronoi tessellation becomes progressively finer, forcing the normalized cell centroids toward the templates that generated them. We quantify this effect through two complementary rates. The average squared geodesic alignment error scales as $O(K^{-2/(d-1)})$; see Proposition~\ref{prop:typical_cell_rate}. The worst-case squared error is controlled by the spherical covering radius and satisfies $O((\log K/K)^{2/(d-1)})$; see Theorem~\ref{thm:hardAssignmentAsymptoticLandFixedD}. Thus, increasing the number of postulated components can make spherical $K$-means preserve the initialization more strongly.

Section~\ref{sec:extensions} develops several consequences and extensions of this geometric analysis. In particular, we show that for small positive $\kappa$, a vMF mixture is a perturbation of the uniform distribution on the sphere, and the corresponding population update remains close to the limiting Voronoi-centroid map. Hence, when the directional signal is weak, the geometry induced by the initialization remains a leading contribution to the update. In addition, for $d=2$, we obtain an explicit characterization of the multi-iteration population dynamics on the circle. The cyclic gaps between neighboring templates evolve according to a discrete heat equation, so the templates converge to a uniformly spaced configuration. The corresponding spectral gap is of order $K^{-2}$, consistent with the $d=2$ specialization of the one-step alignment rate, and yields a relaxation time of order $K^2$.

\section{Problem formulation}
\label{sec:problemFormulation}

This section introduces the directional model and the spherical $K$-means procedure studied in this work.

\subsection{The von Mises-Fisher mixture model}
\label{subsec:vMFandour}

Let $\mathbb{S}^{d-1}=\{v\in\mathbb{R}^d:\|v\|_2=1\}$ denote the unit sphere, and let $d\sigma(v)$ denote its normalized surface measure, so that $\sigma(\mathbb{S}^{d-1})=1$. For a mean direction $\mu\in\mathbb{S}^{d-1}$ and concentration parameter $\kappa\geq0$, the vMF density with respect to $d\sigma$ is
\begin{align}
    f_{\mathrm{vMF}}(v;\mu,\kappa) = c_d(\kappa)\exp\{\kappa \mu^\top v\},
    \label{eqn:vmf_density}
\end{align}
for $v\in\mathbb{S}^{d-1}$, where $c_d(\kappa)$ is the normalizing constant. In particular, $c_d(0)=1$, so $\kappa=0$ gives the uniform distribution on $\mathbb{S}^{d-1}$, whereas larger values of $\kappa$ correspond to stronger concentration around $\mu$.

A $K$-component vMF mixture is specified by weights $\boldsymbol{w}=(w_0,\ldots,w_{K-1})$, mean directions $\boldsymbol{\mu}=(\mu_0,\ldots,\mu_{K-1})$, and concentrations $\boldsymbol{\kappa}=(\kappa_0,\ldots,\kappa_{K-1})$. Here, $w_\ell\geq0$, $\sum_{\ell=0}^{K-1}w_\ell=1$, $\mu_\ell\in \mathbb{S}^{d-1}$, and $\kappa_\ell\geq0$.
Its density is
\begin{align}
    \mathcal{M}^{\mathrm{vMF}}_{K,d}(\boldsymbol{w},\boldsymbol{\mu},\boldsymbol{\kappa})(v) \triangleq \sum_{\ell=0}^{K-1} w_\ell f_{\mathrm{vMF}}(v;\mu_\ell,\kappa_\ell).
    \label{eq:vMF-mixture-density}
\end{align}

\subsection{The uniform limiting experiment}

We consider first the experiment in which the true observations carry no directional cluster structure and are sampled i.i.d.\ from the uniform distribution:
\begin{align}
    v_0,\ldots,v_{n-1} \stackrel{\mathrm{i.i.d.}}{\sim} \mathrm{Unif}(\mathbb{S}^{d-1}).
    \label{eq:true-uniform-model}
\end{align}
The researcher nevertheless fits a postulated $K$-component directional model with equal weights $1/K$ and a shared concentration parameter $\kappa$:
\begin{align}
    v_0,\ldots,v_{n-1}    \stackrel{\mathrm{i.i.d.}}{\sim}    \mathcal{M}^{\mathrm{vMF}}_{K,d}    (\mathbf{1}/K,\boldsymbol{\mu},\kappa\mathbf{1}).
    \label{eq:postulated-vmf-model}
\end{align}
Thus, the fitted model assumes that the observations are concentrated around $K$ component directions, even though the true distribution in~\eqref{eq:true-uniform-model} is uniform. 

The researcher seeks to estimate the component directions $\boldsymbol{\mu}=(\mu_0,\ldots,\mu_{K-1})$ using spherical $K$-means.
To initialize the algorithm, the researcher specifies a collection of hypothesized component directions, ${\boldsymbol{\mu}}^{(0)} = ( {\mu}^{(0)}_0,\ldots, {\mu}^{(0)}_{K-1})$. Our goal is to understand how this initialization influences the resulting estimates of the component directions and, in particular, whether the observations correct the initial hypotheses or whether the estimated directions remain systematically aligned with them.

\subsection{Spherical $K$-means}
\label{subsec:spherical_kmeans}

Spherical $K$-means is the standard hard-assignment algorithm for equal-weight vMF mixture estimation with a common concentration parameter~\cite{hornik2012spherical}. Starting from $K$ initial directions on the sphere, the algorithm alternates between two steps. First, each observation is assigned to the template with which it has the largest inner product. Second, each template is updated to the normalized empirical mean of the observations assigned to it. Since all observations and templates have unit norm, maximizing the inner product is equivalent to nearest-neighbor assignment in angular distance. Thus, spherical $K$-means converts an assignment-induced partition of the sphere into updated template directions.
Algorithm~\ref{alg:generalizedEfNhard} outlines the iterative procedure.

\begin{algorithm}[t!]
\caption{\texttt{Spherical $K$-means}}
\label{alg:generalizedEfNhard}
\textbf{Input:} Initial templates $\{\mu_\ell^{(0)}\}_{\ell=0}^{K-1}\subset \mathbb{S}^{d-1}$, observations $\{v_i\}_{i=0}^{n-1}\subset \mathbb{S}^{d-1}$, and number of iterations $T$.\\
\textbf{Output:} Updated estimates $\{\mu_\ell^{(T)}\}_{\ell=0}^{K-1}$.

\begin{enumerate}
    \item For $t=0,1,\ldots,T-1$:
    \begin{enumerate}
        \item For each $\ell=0,\ldots,K-1$, initialize the assignment set $\mathcal{A}_\ell^{(t)} \leftarrow \emptyset$. 

        \item For each observation $v_i$, compute
        \begin{align}
            {\mathrm{R}}_i^{(t)} \triangleq \argmax_{0\leq \ell\leq K-1} \langle v_i,\mu_\ell^{(t)}\rangle ,
            \label{eq:hard-assignment-rule}
        \end{align}
        and assign $\mathcal{A}_{{\mathrm{R}}_i^{(t)}}^{(t)} \leftarrow \mathcal{A}_{{\mathrm{R}}_i^{(t)}}^{(t)} \cup \{v_i\}$.

        \item For each $\ell=0,\ldots,K-1$, update
        \begin{align}
            \mu_\ell^{(t+1)} = \frac{\sum_{v_i\in\mathcal{A}_\ell^{(t)}} v_i}{\|\sum_{v_i\in\mathcal{A}_\ell^{(t)}} v_i\|_2}.
            \label{eq:hard-assignment-estimator}
        \end{align}
    \end{enumerate}
\end{enumerate}
\end{algorithm}

\subsection{Notation convention}
Our main analysis focuses on a single iteration, $T=1$. The algorithm is initialized at fixed directions $\mu_0^{(0)},\ldots,\mu_{K-1}^{(0)}$ and performs one assignment-update step. For notational brevity, throughout the single-step analysis we write $\mu_\ell \coloneqq \mu_\ell^{(0)}$ for the initialized direction and denote the resulting update by $\widehat{\mu}_\ell \coloneqq \mu_\ell^{(1)}$. We further denote by $\mu_\ell^\star$ the corresponding one-step population spherical $K$-means update, namely, the limit of $\widehat{\mu}_\ell$ as $n \to \infty$.

For fixed templates, let $\mathcal{V}_\ell$ denote the spherical Voronoi cell generated by $\mu_\ell$:
\begin{align}
    \mathcal{V}_\ell = \left\{v\in \mathbb{S}^{d-1}:\langle v,\mu_\ell\rangle \geq \langle v,\mu_k\rangle \text{ for all } k \right\}.
    \label{eqn:VlDef}
\end{align}
Throughout, we write $d_S(x,y) \triangleq \arccos\langle x,y\rangle$ for the geodesic distance on $\mathbb{S}^{d-1}$. 

\section{Relation to the Gaussian model and fundamental properties}
\label{subsec:connection_gaussian_confirmation_bias}

This section establishes the structural foundation for the geometric analysis developed in Section~\ref{sec:fixed_dimension}. It has two complementary goals. First, we show that the uniform spherical experiment is the directional analogue of the pure-noise Gaussian confirmation-bias model studied in~\cite{balanov2025confirmation}.
Second, leveraging this correspondence, we establish several fundamental properties of the spherical $K$-means population update that underpin the fixed-dimensional analysis developed in Section~\ref{sec:fixed_dimension}.

The following proposition is the spherical analogue of~\cite[Lemma~A.1]{balanov2025confirmation}: the empirical $K$-means converges to the conditional centroid of the corresponding Voronoi cell, after which spherical $K$-means normalizes this centroid. The result follows directly by applying the strong law of large numbers
to~\eqref{eq:hard-assignment-estimator}.

\begin{proposition}[Population limit of one spherical $K$-means step]
\label{prop:hardAssignmentPopulationLimit}
Fix $d\geq2$ and $K \geq 2$, and let $\mu_0,\ldots,\mu_{K-1}\in \mathbb{S}^{d-1}$ be the pairwise distinct initial templates. Let $\{\widehat{\mu}_\ell\}_{\ell=0}^{K-1}$ be the output of one iteration of Algorithm~\ref{alg:generalizedEfNhard}, based on observations $v_0,\ldots,v_{n-1}\stackrel{\mathrm{i.i.d.}}{\sim}\mathrm{Unif}(\mathbb{S}^{d-1})$. Then, for every $\ell=0,\ldots,K-1$,
\begin{align}
    \widehat{\mu}_\ell \xrightarrow[n\to\infty]{\mathrm{a.s.}} \mu_\ell^\star \triangleq    \frac{\int_{\mathcal{V}_\ell} v\,d\sigma(v)}    {\left\|\int_{\mathcal{V}_\ell} v\,d\sigma(v)\right\|_2}.
    \label{eq:population-voronoi-centroid}
\end{align}
\end{proposition}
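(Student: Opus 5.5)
The plan is to write the unnormalized update as an empirical average, apply the strong law of large numbers to it, and then pass to the normalization by the continuous mapping theorem. The only care needed is with ties in the assignment rule~\eqref{eq:hard-assignment-rule} and with showing that the limiting vector is nonzero.

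\textbf{Ties.} For $k\neq \ell$, the tie set $\{v:\langle v,\mu_\ell-\mu_k\rangle=0\}$ is a great subsphere, because $\mu_\ell\neq\mu_k$ gives $\mu_\ell-\mu_k\neq 0$. Such a subsphere has $\sigma$-measure zero. A finite union of these sets is still null, so almost surely no observation lies on a boundary between cells. Hence, almost surely, for every $i$ we have $v_i\in\mathcal{A}_\ell$ iff $v_i\in\mathcal{V}_\ell$. This also shows that the tie-breaking convention does not matter.

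\textbf{Strong law.} Define
\begin{align*}
S_n\triangleq\frac1n\sum_{i=0}^{n-1} v_i\,\mathbbm{1}\{v_i\in\mathcal{V}_\ell\}.
\end{align*}
The summands are i.i.d. and bounded in norm by $1$. By the strong law of large numbers applied coordinatewise, $S_n\to m_\ell\triangleq\int_{\mathcal{V}_\ell}v\,d\sigma(v)$ almost surely. The factor $1/n$ cancels in the normalization~\eqref{eq:hard-assignment-estimator}, so $\widehat\mu_\ell=S_n/\|S_n\|_2$ whenever $S_n\neq0$.

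\textbf{Nonvanishing centroid.} This is the step I expect to be the main obstacle. I show $m_\ell\neq0$ by proving $\langle m_\ell,\mu_\ell\rangle>0$.
\begin{itemize}
\item The cell $\mathcal{V}_\ell$ contains an open neighborhood of $\mu_\ell$. Indeed, $\langle\mu_\ell,\mu_\ell\rangle=1>\langle\mu_\ell,\mu_k\rangle$ for every $k\neq\ell$, and these finitely many strict inequalities persist nearby by continuity. So $\sigma(\mathcal{V}_\ell)>0$.
\item Take any $v\in\mathcal{V}_\ell$. Averaging the defining inequality over all $K$ templates gives $\langle v,\mu_\ell\rangle\ge\langle v,\bar\mu\rangle$, where $\bar\mu=\frac1K\sum_k\mu_k$. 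This alone does not give positivity, so I use a symmetry argument instead.
\item The antipodal map $v\mapsto -v$ preserves $\sigma$. If $v\in\mathcal{V}_\ell$ with $\langle v,\mu_\ell\rangle<0$, then $-v$ satisfies $\langle -v,\mu_\ell\rangle\le\langle -v,\mu_k\rangle$ for all $k$, so $-v$ lies in the \emph{farthest-point} cell of $\mu_\ell$.
\item Rather than compare nearest and farthest cells, the cleaner route is to use the neighborhood from the first point. On a small cap around $\mu_\ell$ the integrand $\langle v,\mu_\ell\rangle$ is positive. The task is then to control the negative part of the cell, and I would attempt this through the averaging inequality.
\item A simpler fallback is to note that the Voronoi cell equals the intersection of the sphere with the convex cone $C_\ell=\{x:\langle x,\mu_\ell-\mu_k\rangle\ge0\ \forall k\}$. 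Since $\mu_\ell$ is in the interior of $C_\ell$, the cone is a nontrivial convex cone, and it cannot contain a line through $\mu_\ell$ with $-\mu_\ell$ also inside, because $\langle -\mu_\ell,\mu_\ell-\mu_k\rangle<0$. So $C_\ell$ is a pointed-or-halfspace-like convex cone. Write $m_\ell=\int_{C_\ell\cap\mathbb{S}^{d-1}}v\,d\sigma$, which by polar coordinates is a positive multiple of $\int_{C_\ell\cap B}x\,dx$, the centroid of a convex body with nonempty interior. If this centroid were $0$, then $0$, which lies on $\partial C_\ell$, would be the centroid of the convex body $C_\ell\cap B$. But the centroid of a convex body with nonempty interior lies in its interior, so $0$ would have to be an interior point of $C_\ell$. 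That makes $C_\ell=\mathbb{R}^d$, contradicting $K\ge2$ with distinct templates.
\end{itemize}

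\textbf{Conclusion.} With $m_\ell\neq0$ established, $S_n\neq0$ for all large $n$ almost surely. Since $x\mapsto x/\|x\|_2$ is continuous away from $0$, it follows that $\widehat\mu_\ell\to m_\ell/\|m_\ell\|_2=\mu_\ell^\star$ almost surely.
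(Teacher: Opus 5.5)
Your proof is correct and follows the same route as the paper, which simply applies the strong law of large numbers to the unnormalized update in~\eqref{eq:hard-assignment-estimator} and then normalizes. You also make explicit two steps the paper leaves implicit: ties occur with probability zero, and $\int_{\mathcal{V}_\ell}v\,d\sigma(v)\neq 0$. Your convex-body centroid argument for the second point is valid, so the exploratory bullets before it can be dropped.
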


\subsection{Gaussian-spherical direction equivalence}

The uniform spherical experiment is the directional counterpart of the pure-noise Gaussian model studied in~\cite{balanov2025confirmation}. Let $g\sim\mathcal{N}(0,I_d)$ and write $u=g/\|g\|_2$. Then $u\sim\operatorname{Unif}(\mathbb{S}^{d-1})$, and for unit-norm templates $\mu_0,\ldots,\mu_{K-1}$,
\begin{align}
    \argmax_{0\leq k\leq K-1}\langle g,\mu_k\rangle = \argmax_{0\leq k\leq K-1}\langle u,\mu_k\rangle .    \label{eqn:direction_equivalence}
\end{align}
Thus, Gaussian and spherical hard assignment induce the same Voronoi partition.

For each cell $\mathcal{V}_\ell$, define the conditional spherical centroid
\begin{align}
    \notag m_\ell^{\mathrm{S}} & \triangleq \mathbb{E}_{u \sim \mathrm{Unif}(\mathbb{S}^{d-1})}\!\left[u \mid u\in\mathcal{V}_\ell\right] 
    \\ & = \frac{ \mathbb{E}_{u \sim \mathrm{Unif}(\mathbb{S}^{d-1})}\!\left[u\mathbf{1}_{\{u\in\mathcal{V}_\ell\}}\right]}{\mathbb{P}(u\in\mathcal{V}_\ell)}.
\end{align}
Since $u$ is uniformly distributed on the sphere, $m_\ell^{\mathrm{S}} = \big[\int_{\mathcal{V}_\ell} u\,d\sigma(u) \big] / \sigma(\mathcal{V}_\ell) $. Hence, the spherical population update in~\eqref{eq:population-voronoi-centroid} is precisely the normalized conditional centroid, $\mu_\ell^\star = m_\ell^{\mathrm{S}}/\|m_\ell^{\mathrm{S}}\|_2$.
Similarly, define the Gaussian hard-assignment mean
\begin{align}
    \notag m_\ell^{\mathrm{G}} & \triangleq \mathbb{E}_{g\sim \mathcal{N}(0,I_d)} \!\left[g \mid g/\|g\|_2\in\mathcal{V}_\ell\right]
     \\ & = \frac{ \mathbb{E}_{g \sim \mathcal{N}(0,I_d)}\!\left[g\mathbf{1}_{\{g/\|g\|_2\in\mathcal{V}_\ell\}}\right]}{\mathbb{P}(g/\|g\|_2\in \mathcal{V}_\ell)}.     \label{eqn:Gaussian_hardAssign_mean}
\end{align}
The following lemma shows that the Gaussian and spherical conditional means differ only by a positive radial factor.

\begin{lem}[Gaussian-spherical direction equivalence]
\label{cor:spherical_gaussian_same_direction}
For every $\ell$,
\begin{align}
    m_\ell^{\mathrm{G}} = \mathbb{E}\|g\|_2\, m_\ell^{\mathrm{S}}.    \label{eqn:directional_equivalence}
\end{align}
Consequently,
\begin{align}
    \mu_\ell^\star = \frac{m_\ell^{\mathrm{S}}}{\|m_\ell^{\mathrm{S}}\|_2} = \frac{m_\ell^{\mathrm{G}}}{\|m_\ell^{\mathrm{G}}\|_2}.
\end{align}
\end{lem}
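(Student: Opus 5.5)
The plan is to use the polar decomposition of the standard Gaussian. Write $g=\|g\|_2\,u$ with $u=g/\|g\|_2$. The classical fact (rotation invariance of $\mathcal{N}(0,I_d)$) is that $\|g\|_2$ and $u$ are independent, with $u\sim\mathrm{Unif}(\mathbb{S}^{d-1})$. We will take this as the only probabilistic input. One can justify it quickly: the density of $g$ depends only on $\|g\|_2$. In polar coordinates it therefore factors into a radial density times the normalized surface measure $d\sigma$.

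First we treat the denominator of~\eqref{eqn:Gaussian_hardAssign_mean}. By definition the events coincide, $\{g/\|g\|_2\in\mathcal{V}_\ell\}=\{u\in\mathcal{V}_\ell\}$, so
\[
\mathbb{P}(g/\|g\|_2\in\mathcal{V}_\ell)=\mathbb{P}(u\in\mathcal{V}_\ell)=\sigma(\mathcal{V}_\ell).
\]
The event $g=0$ has probability zero and can be ignored.

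Next we treat the numerator. Substituting the polar decomposition gives $\mathbb{E}[g\mathbf{1}_{\{u\in\mathcal{V}_\ell\}}]=\mathbb{E}[\|g\|_2\, u\,\mathbf{1}_{\{u\in\mathcal{V}_\ell\}}]$. By independence of $\|g\|_2$ and $u$, this factors as $\mathbb{E}\|g\|_2\cdot\mathbb{E}[u\mathbf{1}_{\{u\in\mathcal{V}_\ell\}}]$. The factorization is legitimate because both factors are integrable: $\mathbb{E}\|g\|_2<\infty$ and $u$ is bounded. Dividing by $\sigma(\mathcal{V}_\ell)$ yields $m_\ell^{\mathrm{G}}=\mathbb{E}\|g\|_2\, m_\ell^{\mathrm{S}}$, which is~\eqref{eqn:directional_equivalence}. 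We need $\sigma(\mathcal{V}_\ell)>0$ so that the conditional means are defined. This holds because the templates are pairwise distinct: each cell contains a geodesic ball around $\mu_\ell$ of radius half the minimum separation, so it has positive measure.

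For the consequence, observe that $c\coloneqq\mathbb{E}\|g\|_2>0$ is a scalar. Hence $m_\ell^{\mathrm{G}}/\|m_\ell^{\mathrm{G}}\|_2=m_\ell^{\mathrm{S}}/\|m_\ell^{\mathrm{S}}\|_2$, provided $m_\ell^{\mathrm{S}}\neq0$. Proposition~\ref{prop:hardAssignmentPopulationLimit}, together with the identity $m_\ell^{\mathrm{S}}=\int_{\mathcal{V}_\ell}v\,d\sigma/\sigma(\mathcal{V}_\ell)$ noted above, identifies this common direction with $\mu_\ell^\star$.

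The only subtle point is the nonvanishing of $m_\ell^{\mathrm{S}}$, which is implicitly assumed already in~\eqref{eq:population-voronoi-centroid}. To verify it, pair with $\mu_\ell$. Every $v\in\mathcal{V}_\ell$ satisfies $\langle v,\mu_\ell\rangle\ge\max_k\langle v,\mu_k\rangle$. Averaging this inequality over $k$ gives $\langle v,\mu_\ell\rangle\ge\langle v,\bar\mu\rangle$, where $\bar\mu=K^{-1}\sum_k\mu_k$. Moreover, the maximum over $k$ strictly exceeds the average except on a null set, because the templates are distinct. Integrating over $\mathcal{V}_\ell$ and summing over $\ell$ gives
\[
\textstyle\sum_\ell\int_{\mathcal{V}_\ell}\langle v,\mu_\ell\rangle\,d\sigma=\int\max_k\langle v,\mu_k\rangle\,d\sigma>\int\langle v,\bar\mu\rangle\,d\sigma=0.
\]
This argument only shows positivity on average over the cells, not for each $\ell$ individually. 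A cleaner per-cell argument is the following. Since $\mathcal{V}_\ell$ is the intersection of the sphere with a convex polyhedral cone containing $\mu_\ell$, that cone lies in an open half-space $\{x:\langle x,w\rangle>0\}$ for some $w$, so $\langle m_\ell^{\mathrm{S}},w\rangle>0$. The cone is pointed whenever the templates do not all lie in a common hemisphere boundary configuration; in degenerate cases (e.g.\ $K=2$ antipodal) one simply adopts the paper's implicit convention that $\mu_\ell^\star$ is defined. I expect this nondegeneracy remark to be the main (minor) obstacle; the identity itself is immediate from independence.
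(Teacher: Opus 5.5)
Your proof is correct and follows the paper's argument. You write $g=\|g\|_2u$ with $\|g\|_2$ independent of $u$, note that the denominators agree because the assignment event depends only on $u$, factor the numerator as $\mathbb{E}\|g\|_2\cdot\mathbb{E}[u\mathbf{1}_{\{u\in\mathcal{V}_\ell\}}]$, and normalize by the positive scalar. Your integrability and $\sigma(\mathcal{V}_\ell)>0$ checks are details the paper leaves implicit.

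One correction to your closing remark: $m_\ell^{\mathrm{S}}$ is never zero, and no pointedness is needed. For any $k\neq\ell$, the function $v\mapsto\langle v,\mu_\ell-\mu_k\rangle$ is nonnegative on $\mathcal{V}_\ell$. It vanishes only on a great subsphere, which is $\sigma$-null, and $\sigma(\mathcal{V}_\ell)>0$, so $\langle m_\ell^{\mathrm{S}},\mu_\ell-\mu_k\rangle>0$. This is the strict form of Lemma~\ref{lem:population_update_in_voronoi_cell}.

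In particular, the $K=2$ antipodal case is not degenerate. There the cell is the hemisphere centered at $\mu_0$, and $\mu_0^\star=\mu_0$, as in Proposition~\ref{prop:two_template_closed_form}. Your open-half-space argument, by contrast, fails for every $K=2$ configuration. In that case the cone is a closed half-space containing a hyperplane, so it lies in no open half-space.
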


\begin{proof}
Since $g=\|g\|_2u$, where $\|g\|_2$ is independent of $u$, and since the assignment event depends only on the direction $u$, we have
\begin{align}
    \mathbb{E}_{g \sim \mathcal{N}(0, I_d)}
    \left[g\mathbf{1}_{\{g/\|g\|_2\in \mathcal{V}_\ell\}} \right]
    &=
    \mathbb{E} \left[ \|g\|_2u\mathbf{1}_{\{u\in \mathcal{V}_\ell\}} \right]  \nonumber\\
    &=
    \mathbb{E}\|g\|_2 \cdot \mathbb{E} \left[ u\mathbf{1}_{\{u\in \mathcal{V}_\ell\}} \right].
\end{align}
Moreover, $\mathbb{P}(g/\|g\|_2\in \mathcal{V}_\ell) = \mathbb{P}(u\in \mathcal{V}_\ell)$. Dividing the two identities, and normalizing both sides gives~\eqref{eqn:directional_equivalence}.
\end{proof}

\subsection{Voronoi consistency of the population update}

The next result shows that the population update remains in the Voronoi cell of its generating template; equivalently, it is at least as aligned with that template as with any other template.

\begin{lem}[Voronoi consistency of the population update]
\label{lem:population_update_in_voronoi_cell}
Let $\mu_0,\ldots,\mu_{K-1}\in \mathbb{S}^{d-1}$ be the pairwise distinct initial templates.
Then, for every $\ell$ and every $k$,
\begin{align}
    \langle\mu_\ell^\star,\mu_\ell\rangle \geq \langle\mu_\ell^\star,\mu_k\rangle.    \label{eq:population_update_voronoi_consistency}
\end{align}
\end{lem}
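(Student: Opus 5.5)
Since $\mu_\ell^\star$ is a positive multiple of $m_\ell^{\mathrm{S}}$, hence of $\int_{\mathcal{V}_\ell} v\,d\sigma(v)$, it suffices to show
\begin{align*}
    \Big\langle \int_{\mathcal{V}_\ell} v\,d\sigma(v),\, \mu_\ell-\mu_k\Big\rangle \geq 0 \quad\text{for every } k.
\end{align*}
This reduction needs $\int_{\mathcal{V}_\ell} v\,d\sigma(v)\neq 0$ so that $\mu_\ell^\star$ is well defined. The positivity argument below gives $\langle \int_{\mathcal{V}_\ell} v\,d\sigma, \mu_\ell\rangle>0$ whenever $\sigma(\mathcal{V}_\ell)>0$, since for distinct templates on $\mathcal{V}_\ell$ we have $\langle v,\mu_\ell\rangle \ge \max_k \langle v,\mu_k\rangle$. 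Dividing by the positive normalization then yields \eqref{eq:population_update_voronoi_consistency}.

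The key step uses linearity of the integral:
\begin{align*}
    \Big\langle \int_{\mathcal{V}_\ell} v\,d\sigma(v),\, \mu_\ell-\mu_k\Big\rangle = \int_{\mathcal{V}_\ell}\big(\langle v,\mu_\ell\rangle-\langle v,\mu_k\rangle\big)\,d\sigma(v).
\end{align*}
By the definition \eqref{eqn:VlDef} of the Voronoi cell, the integrand is pointwise nonnegative on $\mathcal{V}_\ell$. Hence the integral is nonnegative. Cells overlap only on hyperplane sections of measure zero, so ties do not matter. The case $k=\ell$ is trivial.

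The only real obstacle is the well-definedness of $\mu_\ell^\star$, namely that the cell centroid is nonzero. For distinct templates, $\mathcal{V}_\ell$ contains a geodesic neighbourhood of $\mu_\ell$ intersected with half-spaces, so it has positive measure, provided $\mu_\ell$ is not a convex combination of the others in a degenerate way. If a cell could be null, I would note that the proposition implicitly assumes $\mu_\ell^\star$ exists, and restrict the statement to that case.

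Positivity of $\int_{\mathcal{V}_\ell}\langle v,\mu_\ell\rangle\,d\sigma$ needs a separate argument. Averaging $\langle v,\mu_\ell-\mu_k\rangle\ge 0$ over $k$ gives $\langle v,\mu_\ell\rangle\ge\langle v,\bar\mu\rangle$, where $\bar\mu$ is the mean template, but this alone is not enough. Instead, I would use the Gaussian equivalence of Lemma~\ref{cor:spherical_gaussian_same_direction} together with the antipodal symmetry of $\sigma$. This pairs $v$ with $-v$ to show that the cell lies mostly in $\{\langle v,\mu_\ell\rangle>0\}$. Alternatively, I would simply assume nondegeneracy as the paper implicitly does.
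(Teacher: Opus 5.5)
Your argument matches the paper's proof: $\langle v,\mu_\ell\rangle-\langle v,\mu_k\rangle\ge 0$ holds pointwise on $\mathcal{V}_\ell$, linearity of the integral transfers this to $\int_{\mathcal{V}_\ell}v\,d\sigma(v)$, and dividing by the positive normalization gives the claim. Your side discussion of why $\int_{\mathcal{V}_\ell}v\,d\sigma(v)\neq 0$ (which the paper takes for granted) is more tangled than needed: you need neither $\langle\int_{\mathcal{V}_\ell}v\,d\sigma(v),\mu_\ell\rangle>0$ nor Gaussian or antipodal arguments, and the caveat about degenerate cells is unnecessary, because for any $k\neq\ell$ the same integrand $\langle v,\mu_\ell-\mu_k\rangle$ is continuous and nonnegative on $\mathcal{V}_\ell$, it equals $1-\langle\mu_\ell,\mu_k\rangle>0$ at $\mu_\ell$, and $\mu_\ell$ is an interior point of $\mathcal{V}_\ell$ since all its defining inequalities are strict there, so $\langle\int_{\mathcal{V}_\ell}v\,d\sigma(v),\mu_\ell-\mu_k\rangle>0$ and the centroid is nonzero.
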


\begin{proof}
This is the spherical counterpart of the hard-assignment consistency property in~\cite[Theorem~3.1(4)]{balanov2025confirmation}.
Indeed, the unnormalized population update is the conditional mean over the assignment region $\mathcal{V}_\ell$, and every point in this region satisfies $\langle v,\mu_\ell\rangle \geq \langle v,\mu_k\rangle$. Averaging over $\mathcal{V}_\ell$ preserves this inequality, and normalization by the positive scalar $\| \int_{\mathcal{V}_\ell} v\,d\sigma(v)\|_2$ yields~\eqref{eq:population_update_voronoi_consistency}.
\end{proof}

\subsection{Closed form for two templates}
The next proposition gives a closed-form expression for the population update in the special case of two templates. It is the spherical analogue of~\cite[Theorem~III.7]{balanov2025confirmation}.

\begin{proposition}[Closed form for two templates]
\label{prop:two_template_closed_form}
Let $K=2$, and let $\mu_0,\mu_1\in \mathbb{S}^{d-1}$ satisfy $\rho=\langle \mu_0,\mu_1\rangle<1$. Then, the one-step population spherical $K$-means updates under the uniform model are
\begin{align}
    \mu_0^\star = \frac{\mu_0-\mu_1}{\|\mu_0-\mu_1\|_2},
    \qquad
    \mu_1^\star = \frac{\mu_1-\mu_0}{\|\mu_1-\mu_0\|_2}.
\end{align}
In particular, $\langle \mu_0^\star,\mu_0\rangle = \langle \mu_1^\star,\mu_1\rangle = \sqrt{(1-\rho)/2}$.
\end{proposition}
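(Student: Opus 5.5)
For $K=2$ the Voronoi cell of $\mu_0$ is the hemisphere $\mathcal{V}_0=\{v:\langle v,\mu_0-\mu_1\rangle\ge 0\}$. Since $\rho<1$, the vector $a\triangleq \mu_0-\mu_1$ is nonzero, so $\mathcal{V}_0$ is a genuine closed hemisphere with pole $e\triangleq a/\|a\|_2$; likewise $\mathcal{V}_1$ is the opposite hemisphere with pole $-e$. By Proposition~\ref{prop:hardAssignmentPopulationLimit}, $\mu_0^\star$ is the normalized value of $\int_{\mathcal{V}_0} v\,d\sigma(v)$. The plan is therefore to compute the direction of the centroid of a hemisphere, which is a pure symmetry argument.

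\textbf{Direction of the centroid.} Decompose $v=\langle v,e\rangle e+P^\perp v$, where $P^\perp$ is the orthogonal projection onto $e^\perp$. Consider the reflection $R=I-2ee^\top$ composed with $-I$, i.e.\ the map $v\mapsto 2\langle v,e\rangle e - v$. This map preserves $\sigma$, fixes $\langle v,e\rangle$, and hence maps $\mathcal{V}_0$ onto itself; it also sends $P^\perp v$ to $-P^\perp v$. Therefore $\int_{\mathcal{V}_0}P^\perp v\,d\sigma(v)=0$, and
\begin{align*}
\int_{\mathcal{V}_0} v\,d\sigma(v)=\Big(\int_{\mathcal{V}_0}\langle v,e\rangle\,d\sigma(v)\Big)e .
\end{align*}
The scalar coefficient is strictly positive, since the integrand is nonnegative on $\mathcal{V}_0$ and positive on a set of positive measure (the open hemisphere). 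Normalizing gives $\mu_0^\star=e=(\mu_0-\mu_1)/\|\mu_0-\mu_1\|_2$. Exchanging the roles of the two templates gives $\mu_1^\star=-e$. The boundary great sphere has measure zero, so ties do not matter. Alternatively, one could invoke Lemma~\ref{cor:spherical_gaussian_same_direction} and the Gaussian closed form of the cited work, but the direct symmetry argument is self-contained.

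\textbf{Alignment value.} Compute $\langle \mu_0-\mu_1,\mu_0\rangle=1-\rho$ and $\|\mu_0-\mu_1\|_2^2=2-2\rho$. Hence
\begin{align*}
\langle\mu_0^\star,\mu_0\rangle=\frac{1-\rho}{\sqrt{2(1-\rho)}}=\sqrt{\frac{1-\rho}{2}},
\end{align*}
and by symmetry the same holds for $\langle\mu_1^\star,\mu_1\rangle$.

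There is no real obstacle here. The only point needing care is checking that the symmetry $v\mapsto 2\langle v,e\rangle e-v$ is an isometry preserving both $\sigma$ and the hemisphere. That is immediate, because it is an orthogonal map (a reflection through the line spanned by $e$) that fixes $e$.
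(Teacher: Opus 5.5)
Your proof is correct and follows the same route as the paper: you identify $\mathcal{V}_0$ as the hemisphere with pole $\mu_0-\mu_1$, use symmetry to show that its centroid points along the pole, and then compute the inner product $\langle \mu_0^\star,\mu_0\rangle$. The only difference is that you make the paper's one-line symmetry claim explicit, via the isometry $v\mapsto 2\langle v,e\rangle e-v$, and you check that the scalar coefficient is positive, which adds rigor without changing the argument.
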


\begin{proof}
The Voronoi cell of $\mu_0$ is
\begin{align}
    \mathcal{V}_0
    &=
    \{v\in \mathbb{S}^{d-1}:
    \langle v,\mu_0\rangle
    \geq
    \langle v,\mu_1\rangle\} \nonumber\\
    &=
    \{v\in \mathbb{S}^{d-1}:
    \langle v,\mu_0-\mu_1\rangle
    \geq 0\}.
\end{align}
Thus, $\mathcal{V}_0$ is the hemisphere with pole $\mu_0-\mu_1$. By symmetry, the centroid of a uniform hemisphere points in the direction of its pole. Therefore
\begin{align}
    \mu_0^\star = \frac{\mu_0-\mu_1}{\|\mu_0-\mu_1\|_2}.
\end{align}
The expression for $\mu_1^\star$ follows similarly.
Finally,
\begin{align}
    \langle \mu_0^\star,\mu_0\rangle
    &=
    \frac{\langle \mu_0-\mu_1,\mu_0\rangle}
    {\|\mu_0-\mu_1\|_2} \nonumber\\
    &=
    \frac{1-\rho}{\sqrt{2(1-\rho)}} = \sqrt{\frac{1-\rho}{2}},
\end{align}
and the same calculation applies to $\mu_1^\star$.
\end{proof}

\subsection{Growing number of templates and dimension}
\label{subsec:growing_K_and_d}

The preceding results describe both fixed-template and fixed-dimensional mechanisms. We now consider a different many-template regime, in which both the ambient dimension $d$ and the number of initialized templates $K$ grow. This regime is closely related to the high-dimensional result of~\cite[Theorem~3.11]{balanov2025confirmation}. When the templates are nearly orthogonal, the Gaussian hard-assignment estimator aligns with the template that generated its cell. By the Gaussian-spherical direction equivalence established above, the same conclusion holds for the one-step population update of spherical $K$-means.

\begin{proposition}[High-dimensional many-template alignment under cyclic symmetry]
\label{prop:hd_many_template_alignment}
Let $K=K(d)\to\infty$ as $d\to\infty$, and let
$\mu_0,\ldots,\mu_{K-1}\in \mathbb{S}^{d-1}$ be initialized templates satisfying
\begin{align}
    \max_{\ell_1\neq \ell_2 \in [K]} \langle \mu_{\ell_1},\mu_{\ell_2}\rangle \log(|\ell_1-\ell_2|) \longrightarrow 0,
    \label{eq:hd_near_orthogonality}
\end{align}
as $d\to\infty$.
Assume in addition that the templates satisfy a cyclic relation, namely, $\langle \mu_{\ell_1}, \mu_{\ell_2} \rangle = \rho_{|\ell_1 - \ell_2|\mathsf{mod} K}  $ for every $0 \leq \ell_1, \ell_2 \leq K-1$.  Then, for every fixed $\ell$, $\langle \mu_\ell^\star,\mu_\ell\rangle \to 1$, or, equivalently, $d_S(\mu_\ell^\star,\mu_\ell)\to 0$.
\end{proposition}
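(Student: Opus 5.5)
The plan is to reduce the statement to the Gaussian hard-assignment result of \cite[Theorem~3.11]{balanov2025confirmation} through Lemma~\ref{cor:spherical_gaussian_same_direction}. First I will check that the assignment regions agree. By \eqref{eqn:direction_equivalence}, the Gaussian hard-assignment rule $\argmax_k\langle g,\mu_k\rangle$ for $g\sim\mathcal{N}(0,I_d)$ partitions $\mathbb{R}^d$ into the cones over the spherical Voronoi cells $\mathcal{V}_\ell$. So the Gaussian one-step population estimator of \cite{balanov2025confirmation}, which is the conditional mean of $g$ over its assigned region, is exactly $m_\ell^{\mathrm{G}}$ in \eqref{eqn:Gaussian_hardAssign_mean}. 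By Lemma~\ref{cor:spherical_gaussian_same_direction} we have $\mu_\ell^\star=m_\ell^{\mathrm{G}}/\|m_\ell^{\mathrm{G}}\|_2$. Hence $\langle\mu_\ell^\star,\mu_\ell\rangle$ equals the cosine similarity between the Gaussian hard-assignment estimator and its template, and the radial factor $\mathbb{E}\|g\|_2$ plays no role.

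Second, I will invoke \cite[Theorem~3.11]{balanov2025confirmation}. I expect it to state that, under the near-orthogonality condition \eqref{eq:hd_near_orthogonality} and the cyclic correlation structure $\rho_{|\ell_1-\ell_2|\bmod K}$, the normalized Gaussian estimator satisfies $\langle m_\ell^{\mathrm{G}},\mu_\ell\rangle/\|m_\ell^{\mathrm{G}}\|_2\to1$ as $d\to\infty$. This asymptotic statement is scale-invariant, and it transfers verbatim. The equivalence $\langle\mu_\ell^\star,\mu_\ell\rangle\to1 \iff d_S(\mu_\ell^\star,\mu_\ell)=\arccos\langle\mu_\ell^\star,\mu_\ell\rangle\to0$ then follows from the continuity of $\arccos$ at $1$.

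If Theorem~3.11 is phrased only for the unnormalized Gaussian mean, I will recover the normalized statement by hand. Project $g$ onto $\mathrm{span}\{\mu_0,\ldots,\mu_{K-1}\}$ and use the cyclic symmetry. The circulant Gram matrix makes the problem invariant under index shifts, so $\langle m_\ell^{\mathrm{G}},\mu_k\rangle$ depends only on $k-\ell \bmod K$. The component along $\mu_\ell$ behaves like $\mathbb{E}[\max_k\langle g,\mu_k\rangle]\approx\sqrt{2\log K}$. Near-orthogonality makes the Gaussian vector $(\langle g,\mu_k\rangle)_k$ approximately i.i.d., and the $\log|\ell_1-\ell_2|$ weighting is the standard Slepian/Sudakov--Fernique-type condition that controls this comparison with the i.i.d. maximum. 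The remaining mass of $m_\ell^{\mathrm{G}}$ is then $O(\sqrt{\log K})$ in total but spread over many nearly orthogonal directions. Its orthogonal norm should be $o(\sqrt{\log K})$, which gives cosine tending to $1$.

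The main obstacle is this last step, namely bounding the orthogonal component of $m_\ell^{\mathrm{G}}$ relative to its $\mu_\ell$ component uniformly as $K,d\to\infty$. I will rely on the cited theorem for it rather than reprove it, so the proposition is essentially a corollary of Lemma~\ref{cor:spherical_gaussian_same_direction}.
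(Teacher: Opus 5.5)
Your reduction through Lemma~\ref{cor:spherical_gaussian_same_direction} is correct and matches how the paper begins. However, there is a genuine gap at exactly the step you flag as ``the main obstacle,'' and that step is the real content of the proof.

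The pointwise part of \cite[Theorem~3.11]{balanov2025confirmation}, as the paper invokes it, controls only the projection, $\langle m_\ell^{\mathrm{G}},\mu_\ell\rangle/\sqrt{2\log K}\to 1$. It gives no information on $\|m_\ell^{\mathrm{G}}\|_2$, so the cosine $\langle m_\ell^{\mathrm{G}},\mu_\ell\rangle/\|m_\ell^{\mathrm{G}}\|_2$ does not follow from it. Your fallback is designed for precisely this case, yet it ends by deferring the crux back to the same theorem, so it does not close the gap. The heuristic you sketch would not close it either. A bound $\|m_\ell^{\mathrm{G}}\|_2=O(\sqrt{\log K})$ is too weak: you need $\|m_\ell^{\mathrm{G}}\|_2\le(1+o(1))\sqrt{2\log K}$, with the same leading constant as the projection. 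Moreover, mass ``spread over many nearly orthogonal directions'' is not automatically small, since squared components add up across orthogonal directions.

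The missing idea is a sharp, essentially free upper bound on the norm. Cyclic symmetry gives $\mathbb{P}(R=\ell)=1/K$ for $R=\arg\max_j\langle g,\mu_j\rangle$; this, rather than the shift-invariance of $\langle m_\ell^{\mathrm{G}},\mu_k\rangle$, is its essential role. Set $a_\ell=m_\ell^{\mathrm{G}}/\|m_\ell^{\mathrm{G}}\|_2$ and $Z_\ell=\langle g,a_\ell\rangle\sim\mathcal{N}(0,1)$. Then $\|m_\ell^{\mathrm{G}}\|_2=\mathbb{E}[Z_\ell\mid R=\ell]$, so by Jensen
\[
e^{\lambda\|m_\ell^{\mathrm{G}}\|_2}\le\mathbb{E}\bigl[e^{\lambda Z_\ell}\mid R=\ell\bigr]\le K\,\mathbb{E}\bigl[e^{\lambda Z_\ell}\bigr]=Ke^{\lambda^2/2}.
\]
Choosing $\lambda=\sqrt{2\log K}$ yields $\|m_\ell^{\mathrm{G}}\|_2\le\sqrt{2\log K}$. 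Together with the projection asymptotics, this gives $1\ge\langle\mu_\ell^\star,\mu_\ell\rangle\ge 1+o(1)$. Equivalently, $\|m_\ell^{\mathrm{G}}\|_2^2-\langle m_\ell^{\mathrm{G}},\mu_\ell\rangle^2=o(\log K)$, which is exactly the orthogonal-component bound you wanted, obtained without any direction-by-direction analysis.
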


\begin{proof}
Let $g\sim\mathcal{N}(0,I_d)$, let $R\coloneqq\arg\max_{j\in[K]}\langle g,\mu_j\rangle$, and recall the definition of the corresponding Gaussian population update $m_\ell^{\mathrm{G}} \coloneqq \mathbb E[g\vert R=\ell]$ from~\eqref{eqn:Gaussian_hardAssign_mean}. By cyclic symmetry, $\mathbb P(R=\ell)=1/K$ for every $\ell$. Moreover, under condition~\eqref{eq:hd_near_orthogonality}, the pointwise part of \cite[Theorem~3.11]{balanov2025confirmation} yields 
\begin{align}
    \langle m_\ell^{\mathrm{G}},\mu_\ell\rangle / (\sqrt{2\log K)} \to 1, \label{eqn:ref_balanov2025confirmation}
\end{align}
 for every fixed $\ell$. In particular, condition~\eqref{eq:hd_near_orthogonality} is a uniform high-dimensional analogue of Berman's logarithmic decorrelation condition for the extremes of stationary Gaussian sequences~\cite{berman1964limit,leadbetter2012extremes}.

It remains to control the norm of $m_\ell^{\mathrm{G}}$. Clearly, by~\eqref{eqn:ref_balanov2025confirmation},
\begin{align}
    \|m_\ell^{\mathrm{G}}\|_2 \geq \langle m_\ell^{\mathrm{G}},\mu_\ell\rangle = (1+o(1))\sqrt{2\log K}.
    \label{eqn:lower_bound}
\end{align}
For the reverse bound, let $a_\ell \coloneqq m_\ell^{\mathrm{G}} / \|m_\ell^{\mathrm{G}}\|_2$. Then $Z_\ell\coloneqq\langle g,a_\ell\rangle\sim\mathcal{N}(0,1)$, and 
\begin{align}
    \mathbb{E}\!\left[Z_\ell \vert R = \ell \right]
    &=    \mathbb{E}_{g\sim\mathcal{N}(0,I_d)}\!\left[ \left\langle g,a_\ell\right\rangle \,\middle|\, R = \ell \right] \notag\\
    &=
    \left\langle \mathbb{E}_{g\sim\mathcal{N}(0,I_d)}\!\left[g\vert R = \ell \right], a_\ell \right\rangle \notag\\
    &=
    \left\langle m_\ell^{G}, \frac{m_\ell^{G}}{\lVert m_\ell^{G}\rVert_2} \right\rangle = \lVert m_\ell^{G}\rVert_2.
\end{align}
Therefore, for every $\lambda>0$, Jensen's inequality gives
\begin{align}
    \exp\left( \lambda\|m_\ell^{G}\|_2 \right)
    &\leq
    \mathbb{E}\left[ e^{\lambda Z_\ell} \,\middle|\, R=\ell \right] \notag\\
    &\leq
    K\mathbb{E}\left[e^{\lambda Z_\ell}\right] = K e^{\lambda^2/2}.
\end{align}
Taking natural logarithms and dividing by $\lambda$ yields
\begin{align}
    \|m_\ell^{G}\|_2 \leq \frac{\log K}{\lambda}+\frac{\lambda}{2}.
\end{align}
Choosing $\lambda=\sqrt{2\log K}$ yields $\|m_\ell^{G}\|_2 \leq \sqrt{2\log K}$. Combining with~\eqref{eqn:ref_balanov2025confirmation}--\eqref{eqn:lower_bound} yields $\langle m_\ell^{\mathrm{G}},\mu_\ell\rangle / \|m_\ell^{\mathrm{G}}\|_2 \to 1$. By Lemma~\ref{cor:spherical_gaussian_same_direction}, $\mu_\ell^\star = m_\ell^{\mathrm{G}} / \|m_\ell^\mathrm{G}\|_2$, which proves $\langle\mu_\ell^\star,\mu_\ell\rangle\to1$.
\end{proof}

\section{Fixed dimension and growing number of templates}
\label{sec:fixed_dimension}

The previous section showed that several results from the Gaussian confirmation-bias model transfer to spherical $K$-means through the Gaussian-spherical direction equivalence. In particular, Proposition~\ref{prop:hd_many_template_alignment} gives the spherical analogue of the high-dimensional Gaussian result of~\cite{balanov2025confirmation}: when both $d$ and $K$ grow and the initialized templates are sufficiently decorrelated, the one-step population update aligns with the corresponding initialized template.

Here, we focus on a complementary regime in which we fix the dimension $d$ and let the number of templates $K$ grow. This regime is particularly relevant to cryo-EM refinement, where the dimensionality of the image representation is fixed by the chosen discretization, while a large bank of reference templates can be generated from one or more candidate volumes to densely sample viewing directions and structural hypotheses~\cite{cheng2015primer, bendory2020single}. In fixed dimension, near-orthogonality cannot persist as $K\to\infty$. The relevant mechanism is instead geometric: the initialization induces a spherical Voronoi tessellation, and as $K$ increases, the cells become smaller. Consequently, the normalized centroid of each cell becomes increasingly close to the template that generated it. This gives the fixed-dimensional alignment rates proved below.

Proposition~\ref{prop:hardAssignmentPopulationLimit} implies that one population spherical $K$-means update is the normalized centroid of the Voronoi cell induced by the initialization. Thus, to understand how strongly the update preserves the initialization, it suffices to control the size of these cells. For random templates, this leads to two complementary bounds. The first is a typical-cell bound, controlling the average alignment error over the $K$ templates. The second is a worst-cell bound, controlling the largest alignment error over all templates. The typical-cell rate follows from standard estimates for random spherical Voronoi tessellations; see, for example, the analysis of typical spherical Voronoi cells in~\cite{kabluchko2021typical}.

\begin{proposition}[Typical-cell alignment]
\label{prop:typical_cell_rate}
Fix $d\geq2$, and let
$\mu_0,\ldots,\mu_{K-1}\stackrel{\mathrm{i.i.d.}}{\sim}\mathrm{Unif}(\mathbb{S}^{d-1})$.
Let $\mu_\ell^\star$ be the population update in~\eqref{eq:population-voronoi-centroid}.
Then, 
\begin{align}
    \mathbb{E}\left[ \frac{1}{K} \sum_{\ell=0}^{K-1}  d_S^2(\mu_\ell^\star,\mu_\ell) \right] = O\left(K^{-\frac{2}{d-1}}\right).    \label{eq:typical_cell_average_rate}
\end{align}
\end{proposition}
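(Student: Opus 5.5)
The plan is to reduce the alignment error of each cell to a purely geometric size parameter, namely the circumradius of $\mathcal{V}_\ell$ about its generator, and then to use a moment bound for that radius in random spherical Voronoi tessellations.

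\textbf{Step 1: deterministic reduction.} Set $R_\ell \triangleq \sup_{v\in\mathcal{V}_\ell} d_S(v,\mu_\ell)$. We first treat the case $R_\ell\le \pi/4$. Every $v\in\mathcal{V}_\ell$ then satisfies $\langle v,\mu_\ell\rangle\ge\cos R_\ell>0$, so $m\triangleq\int_{\mathcal{V}_\ell}v\,d\sigma(v)\neq 0$. Moreover, $\mu_\ell^\star$ is a normalized average of points lying in the spherical cap $C(\mu_\ell,R_\ell)$.

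Write $v=\cos\theta\,\mu_\ell+\sin\theta\,w$ with $w\perp\mu_\ell$ and $\theta\le R_\ell$. The component of $m$ along $\mu_\ell$ is at least $\sigma(\mathcal{V}_\ell)\cos R_\ell$. The orthogonal component has norm at most $\sigma(\mathcal{V}_\ell)\sin R_\ell$. Hence
\[
\tan d_S(\mu_\ell^\star,\mu_\ell)\le \tan R_\ell ,
\]
which gives $d_S(\mu_\ell^\star,\mu_\ell)\le R_\ell$. This is just the geodesic convexity of small caps.

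When $R_\ell>\pi/4$, we use the trivial bound $d_S\le\pi$. The bound is well defined: by Lemma~\ref{lem:population_update_in_voronoi_cell}, $\mu_\ell^\star$ lies in $\mathcal{V}_\ell$ whenever $m\neq0$. This happens almost surely for $K\ge d+1$, where the cell is contained in an open hemisphere. Combining the two cases,
\[
d_S^2(\mu_\ell^\star,\mu_\ell)\le R_\ell^2 + \pi^2\mathbf 1\{R_\ell>\pi/4\} .
\]

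\textbf{Step 2: exchangeability.} The templates are i.i.d., so the distribution of $R_\ell$ does not depend on $\ell$. The left-hand side of \eqref{eq:typical_cell_average_rate} is therefore at most
\[
\mathbb{E}[R_0^2]+\pi^2\,\mathbb{P}(R_0>\pi/4).
\]
It suffices to show $\mathbb{P}(R_0>r)\le C\,K\,e^{-cKr^{d-1}}$, or any tail bound of similar strength.

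\textbf{Step 3: tail of the circumradius.} This is the main step. Condition on $\mu_0$. The event $R_0>r$ means that some point $v\in C(\mu_0,r)^c$ has no other template $\mu_k$ strictly closer to $v$ than $\mu_0$. Concretely, the cap $C(v,d_S(v,\mu_0))\ni\mu_0$ contains no other template.

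To make this a finite union bound, cover the annulus $\{r/2\le d_S(\cdot,\mu_0)\le r\}$ by $N=O(1)$ caps of radius $r/8$, with $N$ depending only on $d$. If some $v$ at distance at least $r$ from $\mu_0$ lies in $\mathcal{V}_0$, follow the geodesic from $\mu_0$ to $v$. Its point $v'$ at distance $3r/4$ lies in $\mathcal{V}_0$, because Voronoi cells are star-shaped with respect to their generator. The empty cap $C(v',3r/4)$ then contains one of the covering caps $B_j$ in which $v'$ lies, enlarged suitably; in any case it contains a cap of radius at least $r/8$ inside a fixed finite family. Each such cap is empty of the other $K-1$ i.i.d.\ templates with probability $(1-c_d r^{d-1})^{K-1}\le e^{-c(K-1)r^{d-1}}$. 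The union bound gives
\[
\mathbb{P}(R_0>r)\le N e^{-c(K-1)r^{d-1}} .
\]
I expect this covering and star-shapedness bookkeeping to be the only delicate point. An alternative is to quote standard estimates for typical spherical Voronoi cells \cite{kabluchko2021typical}.

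\textbf{Step 4: integration.} Finally,
\[
\mathbb{E}[R_0^2]=\int_0^\pi 2r\,\mathbb{P}(R_0>r)\,dr\le 2N\int_0^\infty r\,e^{-cKr^{d-1}}\,dr=O\bigl(K^{-2/(d-1)}\bigr),
\]
using the substitution $s=K^{1/(d-1)}r$. The remainder term $\pi^2\mathbb{P}(R_0>\pi/4)$ is exponentially small in $K$. Together these yield \eqref{eq:typical_cell_average_rate}.
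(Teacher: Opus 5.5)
Your proof is correct, and its skeleton matches the paper's. Voronoi consistency (Lemma~\ref{lem:population_update_in_voronoi_cell}) gives $d_S(\mu_\ell^\star,\mu_\ell)\le R_\ell$ for every $\ell$, which makes your $\pi/4$ case split and tangent computation in Step~1 redundant. Exchangeability then reduces everything to $\mathbb{E}[R_0^2]$, and a tail bound $\mathbb{P}(R_0>r)\le N e^{-c(K-1)r^{d-1}}$ with $N=N(d)$ is integrated.

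The genuine difference lies in how the tail bound is proved. The paper (Lemmas~\ref{lem:spherical_sector_mass}--\ref{lem:large_cell_empty_sector}) fixes a finite net of tangent directions at $\mu_0$ and builds truncated geodesic sectors. It then shows, by an explicit trigonometric inequality, that a competing template in a sector within distance $r$ blocks every farther point of that sector.

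You instead use two soft facts about Voronoi cells. First, $\mathcal{V}_0$ is the sphere intersected with the convex cone $\{x:\langle x,\mu_0-\mu_k\rangle\ge 0 \text{ for all } k\}$. It is therefore star-shaped about $\mu_0$, so a far point can be slid back to $v'$ at distance $3r/4$. Second, the open cap about $v'$ of radius $d_S(v',\mu_0)$ contains no other template. By the triangle inequality, that cap contains any covering cap $B_j\ni v'$ of radius $r/8$, so no enlargement is needed.

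Your route avoids the blocking computation and would carry over readily to other settings with star-shaped cells and volume-doubling balls. The paper's route is more hands-on but tied to spherical trigonometry. Both give the same $K$-free exponential tail.

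One slip in Step~2: a tail bound of the form $CKe^{-cKr^{d-1}}$ would not suffice. Truncating it at $1$ and integrating yields only $O((\log K/K)^{2/(d-1)})$, which is the worst-cell rate of Theorem~\ref{thm:hardAssignmentAsymptoticLandFixedD}. The typical-cell rate needs a prefactor independent of $K$. Your reduction to the fixed scale $3r/4$ provides exactly that, so the final argument is unaffected.
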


\begin{proof}
For each $\ell$, let $R_\ell \triangleq \sup_{v\in\mathcal{V}_\ell} d_S(v,\mu_\ell)$ denote the geodesic radius of the Voronoi cell generated by $\mu_\ell$.
By Lemma~\ref{lem:population_update_in_voronoi_cell},
$\mu_\ell^\star\in\mathcal{V}_\ell$.
Therefore, $d_S(\mu_\ell^\star,\mu_\ell) \leq R_\ell$, and hence $d_S^2(\mu_\ell^\star,\mu_\ell) \leq R_\ell^2$. 

By exchangeability of the random templates,
\begin{align}
    \mathbb{E}\left[ \frac{1}{K} \sum_{\ell=0}^{K-1} d_S^2(\mu_\ell^\star,\mu_\ell)\right] = \mathbb{E}\left[ d_S^2(\mu_0^\star,\mu_0) \right]  \leq \mathbb{E}[R_0^2].
\end{align}
Proposition~\ref{prop:typical_cell_radius} gives $\mathbb{E}[R_0^2] = O\left(K^{-2/ (d-1)}\right)$. Combining the two bounds proves the result.
\end{proof}

\begin{figure*}[t!]
\centering
\begin{minipage}[t]{0.68\textwidth}
    \vspace{0pt}
    \includegraphics[width=\linewidth]{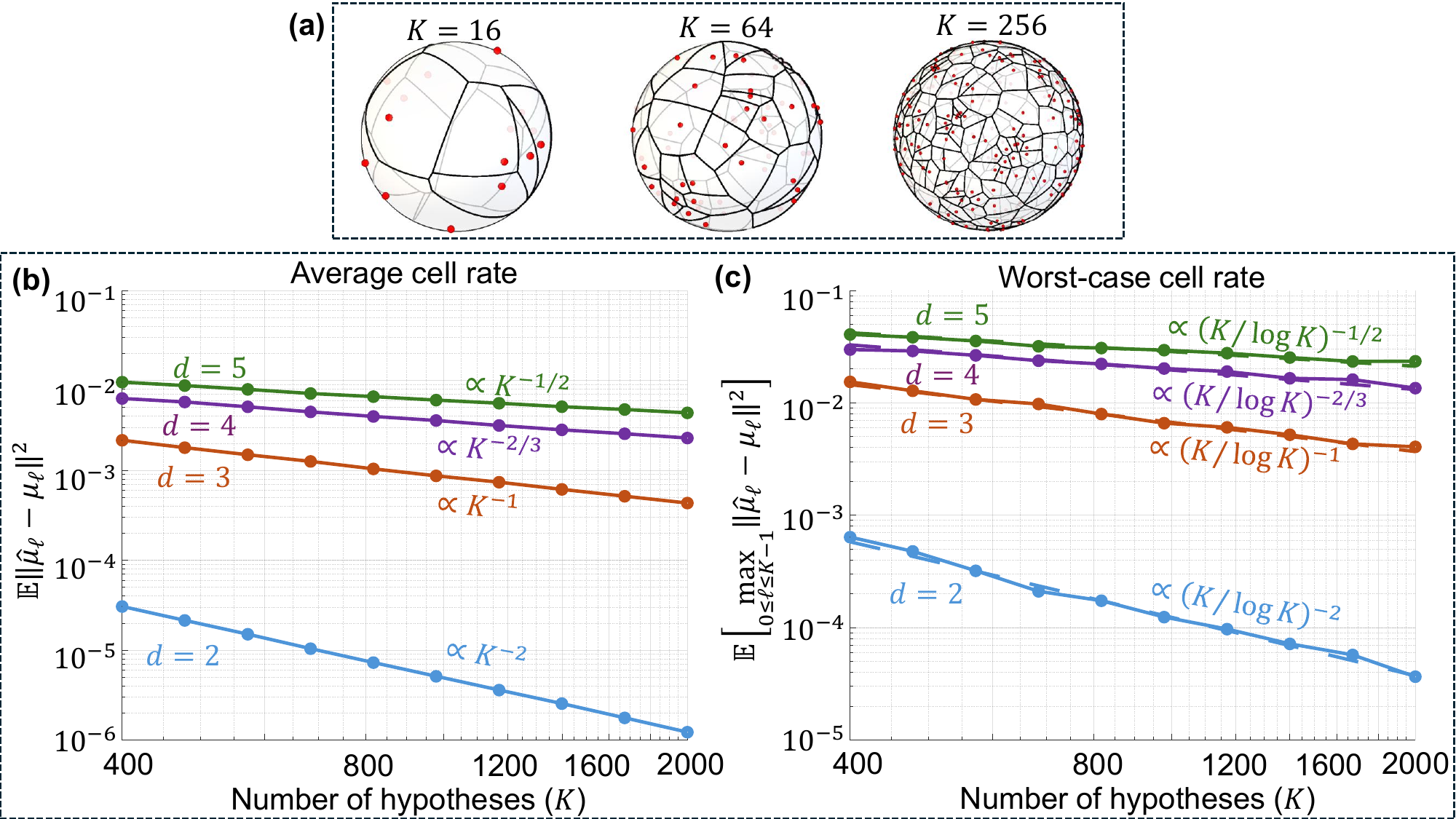}
\end{minipage}\hfill
\begin{minipage}[t]{0.31\textwidth}
    \vspace{0pt}
    \captionsetup{width=\linewidth}
    \caption{\textbf{Initialization-induced bias in the fixed-dimensional many-template regime.}
    (a) Random templates on $\mathbb{S}^2$ induce a spherical Voronoi tessellation that becomes finer as the number of postulated components $K$ increases. Red points denote the initialized templates, and black curves denote the corresponding Voronoi-cell boundaries (the gray cells). 
    (b) Average population alignment error as a function of $K$. The observed decay agrees with the typical-cell rate $K^{-2/(d-1)}$.
    (c) Worst-cell population alignment error.
    The observed decay agrees with the covering-rate scaling $(\log K/K)^{2/(d-1)}$.}    \label{fig:2}
\end{minipage}
\end{figure*}

The typical-cell result describes the average alignment error over templates. We next control all templates simultaneously. For this purpose, we use the random covering radius of the initialized templates:
\begin{align}
    \rho_K \triangleq \sup_{v\in \mathbb{S}^{d-1}} \min_{0\leq \ell\leq K-1} d_S(v,\mu_\ell). \label{eq:covering_radius_def}
\end{align}
The quantity $\rho_K$ is the largest geodesic distance from any point on the sphere to its nearest initialized template. Thus, it controls the largest radius of any spherical Voronoi cell induced by the templates.

For i.i.d.\ uniform templates on $\mathbb{S}^{d-1}$, Reznikov and Saff~\cite{reznikov2016covering} show that, as $K\to\infty$,
\begin{align}
    \rho_K = O_{\mathbb{P}} \left( \left(\frac{\log K}{K}\right)^{\frac{1}{d-1}} \right).    \label{eqn:covering_radius_bound}
\end{align}
Thus, the following result is an asymptotic statement in the large-$K$ limit, for fixed dimension $d$. It gives the corresponding worst-cell alignment bound.

\begin{thm}[Worst-cell alignment]
\label{thm:hardAssignmentAsymptoticLandFixedD}
Fix $d\geq2$, and let $\mu_0,\ldots,\mu_{K-1}\stackrel{\mathrm{i.i.d.}}{\sim}\mathrm{Unif}(\mathbb{S}^{d-1})$. Let $\mu_\ell^\star$ be the population update in~\eqref{eq:population-voronoi-centroid}. Then, as $K\to\infty$,
\begin{align}
    \max_{0\leq\ell\leq K-1} d_S^2(\mu_\ell^\star,\mu_\ell) = O_{\mathbb{P}}\left( \left(\frac{\log K}{K}\right)^{\frac{2}{d-1}} \right).    \label{eq:worst_cell_squared_alignment_rate}
\end{align}
\end{thm}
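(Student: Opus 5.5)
The plan is to reduce the worst-cell alignment error to the covering radius $\rho_K$ in~\eqref{eq:covering_radius_def}, and then invoke the Reznikov--Saff bound~\eqref{eqn:covering_radius_bound}. The argument mirrors the proof of Proposition~\ref{prop:typical_cell_rate}, with the cell radius replaced by a uniform bound over all cells.

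\textbf{Step 1: the update lies in its own cell.} Fix a realization of the templates. These are almost surely pairwise distinct, so Proposition~\ref{prop:hardAssignmentPopulationLimit} and Lemma~\ref{lem:population_update_in_voronoi_cell} apply. The lemma gives $\mu_\ell^\star\in\mathcal{V}_\ell$ for every $\ell$. (We use that $\int_{\mathcal{V}_\ell} v\,d\sigma(v)\neq 0$, which holds since the cell has positive measure and is contained in a hemisphere around $\mu_\ell$. In degenerate cases a small perturbation argument, or simply the a.s.\ genericity of the templates, suffices.)

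\textbf{Step 2: every cell radius is at most $\rho_K$.} Take any $v\in\mathcal{V}_\ell$. By the definition~\eqref{eqn:VlDef}, $\langle v,\mu_\ell\rangle\ge\langle v,\mu_k\rangle$ for all $k$. Since $\arccos$ is decreasing, this means $d_S(v,\mu_\ell)=\min_k d_S(v,\mu_k)\le\rho_K$. Applying this with $v=\mu_\ell^\star$ yields, deterministically,
\begin{align}
    \max_{0\le\ell\le K-1} d_S^2(\mu_\ell^\star,\mu_\ell)\le \rho_K^2 .
\end{align}

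\textbf{Step 3: transfer the probabilistic rate.} By~\eqref{eqn:covering_radius_bound}, for every $\varepsilon>0$ there are constants $C_\varepsilon$ and $K_\varepsilon$ such that, for all $K\ge K_\varepsilon$,
\begin{align}
    \mathbb{P}\bigl(\rho_K> C_\varepsilon(\log K/K)^{1/(d-1)}\bigr)<\varepsilon .
\end{align}
Squaring inside the event shows that $\rho_K^2=O_{\mathbb P}\bigl((\log K/K)^{2/(d-1)}\bigr)$. The deterministic domination from Step 2 then transfers this bound to the maximum, which gives~\eqref{eq:worst_cell_squared_alignment_rate}.

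The main obstacle is not in this reduction, which is essentially elementary. It lies in the covering-radius estimate~\eqref{eqn:covering_radius_bound}, which we take from~\cite{reznikov2016covering}. If one wanted to prove that estimate directly, the approach would be a union bound. Cover $\mathbb{S}^{d-1}$ by $N\asymp r^{-(d-1)}$ caps of radius $r/2$. Each cap is empty of templates with probability $(1-c\,r^{d-1})^K\le e^{-cKr^{d-1}}$. If every cap is nonempty, then $\rho_K\le r$. Choosing $r^{d-1}=C\log K/K$ with $C$ large makes $N e^{-cKr^{d-1}}\to0$. This is where the $\log K$ factor enters.
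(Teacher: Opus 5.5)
Your proposal is correct and follows the paper's proof: Lemma~\ref{lem:population_update_in_voronoi_cell} places $\mu_\ell^\star$ in $\mathcal{V}_\ell$, so $d_S(\mu_\ell^\star,\mu_\ell)=\min_k d_S(\mu_\ell^\star,\mu_k)\le\rho_K$ deterministically, and the Reznikov--Saff rate~\eqref{eqn:covering_radius_bound} finishes the argument. One minor slip is in your parenthetical: $\mathcal{V}_\ell$ need not lie in the hemisphere centered at $\mu_\ell$ (for $K=2$ it is the hemisphere with pole $\mu_0-\mu_1$); what guarantees $\int_{\mathcal{V}_\ell}v\,d\sigma(v)\neq 0$ is that $\mathcal{V}_\ell$ lies in $\{v:\langle v,\mu_\ell-\mu_k\rangle\ge 0\}$ for any $k\neq\ell$ and has positive measure.
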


\begin{proof}
By Lemma~\ref{lem:population_update_in_voronoi_cell}, $\mu_\ell^\star\in\mathcal{V}_\ell$ for every $\ell$. Therefore, since $\mu_\ell$ is a nearest template to every point in $\mathcal{V}_\ell$,
\begin{align}
    d_S(\mu_\ell^\star,\mu_\ell) = \min_{0\leq k\leq K-1} d_S(\mu_\ell^\star,\mu_k)\leq \rho_K.
\end{align}
Hence, $\max_{0\leq\ell\leq K-1} d_S^2(\mu_\ell^\star,\mu_\ell) \leq \rho_K^2$. Combining this with the random covering-radius~\eqref{eqn:covering_radius_bound} proves~\eqref{eq:worst_cell_squared_alignment_rate}.
\end{proof}

The two rates capture different aspects of the same geometric mechanism. The expected average squared alignment error decays at the cell-volume scale $K^{-2/(d-1)}$. For the worst template, an additional logarithmic factor controls the largest cell in the random tessellation. In both cases, increasing the number of postulated components makes the population update more aligned with the initialization, even though the observations contain no cluster structure.

Figure~\ref{fig:2} illustrates the geometric refinement of the initialization-induced Voronoi tessellation and the corresponding population alignment rates. Panel~(a) shows random templates on $\mathbb{S}^2$ and the spherical Voronoi cells they induce for increasing values of $K$. Panels~(b)-(c) show the average-cell and worst-cell population errors, respectively, for dimensions $d=2,3,4,5$. The simulations use $K$ logarithmically spaced between $400$ and $2000$ with $10$ values, and $30$ independent trials for each pair $(d,K)$. For $d=2$, the population update is computed exactly from the circle geometry. For $d\geq3$, the population Voronoi-centroid integrals are approximated using $2\times10^5$ random integration points on $\mathbb{S}^{d-1}$.

\section{Consequences and Extensions}
\label{sec:extensions}

This section presents three consequences of our main results and concludes with directions for future work. 

\subsection{Multiple iterations}
\label{subsec:multiple_iterations}

The results so far have focused on a single spherical $K$-means iteration, which is sufficient to expose the initialization-induced mechanism under the uniform model. The extension to multiple iterations is natural and parallels the Gaussian analysis in~\cite[Section~IV]{balanov2025confirmation}. At each iteration, the current templates induce a spherical Voronoi tessellation, and the next templates are the normalized centroids of the corresponding cells. Hence, if each update remains close to the current template, then the iterates remain close to the original initialization for any fixed number of iterations.

Indeed, if $\langle \mu_\ell^{(t+1)},\mu_\ell^{(t)}\rangle \geq 1-\varepsilon_K$, for $t=0,\ldots,T-1$, and for small enough $\varepsilon_K$, then the argument of~\cite[Proposition~4.1]{balanov2025confirmation} gives
\begin{align}
    \langle \mu_\ell^{(T)},\mu_\ell^{(0)}\rangle \geq 1-T^2\varepsilon_K .
\end{align}
Under the i.i.d.\ uniform initialization considered in Section~\ref{sec:fixed_dimension}, our one-step results provide precisely such a vanishing small-movement bound for the first update: as $K\to\infty$, the initialization-induced Voronoi cells become small and the population update becomes increasingly aligned with its generating template. Thus, for fixed $T$, if an analogous bound with $\varepsilon_K\to 0$ continues to hold along the subsequent iterates, the $T$-th iterate remains asymptotically aligned with the initialization.

\subsection{Small nonzero concentration}
\label{subsec:small_kappa}

Our main results concern the uniform case $\kappa=0$, in which the observations are uniformly distributed on the sphere and do not contain any directional signal. This setting isolates initialization-induced bias in its purest form. The same mechanism, however, persists for weak directional signal, because a vMF mixture with small concentration is a perturbation of the uniform distribution. This weak-signal regime more closely reflects realistic cryo-EM settings, where the observations may contain genuine structural information, but each individual image remains highly noisy and only weakly informative~\cite{bendory2020single}. Accordingly, the one-step population update for small $\kappa>0$ remains close to the null Voronoi-centroid update, as also illustrated in Figure~\ref{fig:1}(c). 

To formalize this observation, consider the $K$-component vMF mixture on $\mathbb{S}^{d-1}$ defined in~\eqref{eq:vMF-mixture-density}. For notational simplicity, we write its density with respect to the normalized surface measure $d\sigma(v)$ as
\begin{align}
    p_\kappa(v) = \sum_{j=0}^{K-1} w_j f_{\mathrm{vMF}}(v;\theta_j,\kappa_j),
\end{align}
where the directions $\theta_j$ and weights $w_j$ are arbitrary and $\max_{0\leq j\leq K-1}\kappa_j \leq \kappa$.

Fix initialized templates $\mu_0,\ldots,\mu_{K-1}$, and let $\mathcal{V}_\ell$ be the Voronoi cell defined in~\eqref{eqn:VlDef}. Define the unnormalized population update under $p_\kappa$ by
\begin{align}
    m_\ell(\kappa) \triangleq \int_{\mathcal{V}_\ell} v p_\kappa(v)\,d\sigma(v),
\end{align}
and the corresponding spherical $K$-means population update by
\begin{align}
    \mu_\ell^\star(\kappa) \triangleq \frac{m_\ell(\kappa)} {\|m_\ell(\kappa)\|_2}.
\end{align}
At $\kappa=0$, we have $p_0(v)=1$, and therefore $m_\ell(0) = \int_{\mathcal{V}_\ell} v\,d\sigma(v)$, so $\mu_\ell^\star(0)$ coincides with the population update in~\eqref{eq:population-voronoi-centroid}.

\begin{proposition}[Stability of the one-step update for small $\kappa$]
\label{prop:small_kappa_stability}
Fix $d\geq2$ and initialized templates $\mu_0,\ldots,\mu_{K-1}$. Assume that $m_\ell(0)\neq 0$. Then, for sufficiently small $\kappa$, $\|m_\ell(\kappa)-m_\ell(0)\|_2 = O (\kappa)$. Consequently,
\begin{align}
    \|\mu_\ell^\star(\kappa)-\mu_\ell^\star(0)\|_2 = O(\kappa).
\end{align}
\end{proposition}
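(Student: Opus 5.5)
The plan is to bound the density perturbation $p_\kappa-1$ uniformly on the sphere, and then pass the bound through the integral defining $m_\ell(\kappa)$ and through the normalization map. Since $\int_{\mathbb{S}^{d-1}} p_\kappa\, d\sigma = 1 = \int_{\mathbb{S}^{d-1}} 1\, d\sigma$ and $\sum_j w_j = 1$, we can write
\begin{align}
    m_\ell(\kappa)-m_\ell(0) = \sum_{j=0}^{K-1} w_j \int_{\mathcal{V}_\ell} v\,\bigl(f_{\mathrm{vMF}}(v;\theta_j,\kappa_j)-1\bigr)\,d\sigma(v).
\end{align}
Using $\|v\|_2=1$ and $\sigma(\mathcal{V}_\ell)\leq 1$, this gives
\begin{align}
    \|m_\ell(\kappa)-m_\ell(0)\|_2 \leq \max_j \sup_{v} \bigl|c_d(\kappa_j)e^{\kappa_j\theta_j^\top v}-1\bigr|.
\end{align}
So the first step reduces to a uniform bound on a single vMF density.

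For the single-component bound, take $0\leq \kappa_j\leq\kappa\leq 1$. Since $|\theta_j^\top v|\leq 1$, we have $e^{-\kappa}\leq e^{\kappa_j\theta_j^\top v}\leq e^{\kappa}$. The normalizing constant is given by $c_d(\kappa_j)^{-1}=\int e^{\kappa_j\theta_j^\top v}\,d\sigma(v)$, so it also lies in $[e^{-\kappa},e^{\kappa}]$. Hence $c_d(\kappa_j)e^{\kappa_j\theta_j^\top v}\in[e^{-2\kappa},e^{2\kappa}]$, and therefore
\begin{align}
    \bigl|c_d(\kappa_j)e^{\kappa_j\theta_j^\top v}-1\bigr|\leq e^{2\kappa}-1\leq 2e^{2}\kappa .
\end{align}
This is uniform in $v$, in $\theta_j$, in the weights, and even in $d$ and $K$. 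It proves $\|m_\ell(\kappa)-m_\ell(0)\|_2=O(\kappa)$ with an explicit constant.

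For the normalized update, use the elementary inequality, valid for nonzero vectors $a,b$,
\begin{align}
    \Bigl\|\frac{a}{\|a\|_2}-\frac{b}{\|b\|_2}\Bigr\|_2\leq \frac{2\|a-b\|_2}{\|b\|_2}.
\end{align}
It follows from the triangle inequality after inserting the intermediate vector $a/\|b\|_2$. Apply it with $b=m_\ell(0)\neq0$, which holds by assumption. Once $\kappa$ is small enough that $2e^2\kappa<\|m_\ell(0)\|_2/2$, we also have $m_\ell(\kappa)\neq0$, so $\mu_\ell^\star(\kappa)$ is well defined. We then obtain $\|\mu_\ell^\star(\kappa)-\mu_\ell^\star(0)\|_2\leq 4e^2\kappa/\|m_\ell(0)\|_2=O(\kappa)$.

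The main obstacle is this normalization step. The implied constant degrades as $\|m_\ell(0)\|_2^{-1}$, which is why the hypothesis $m_\ell(0)\neq0$ is needed; it is also where "sufficiently small $\kappa$" enters, to keep $m_\ell(\kappa)$ away from zero. For small cells, $\|m_\ell(0)\|_2\approx\sigma(\mathcal{V}_\ell)$, so the constant effectively scales like $1/\sigma(\mathcal{V}_\ell)$. It is worth remarking after the proof that the constant is not uniform in $K$.
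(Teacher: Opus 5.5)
Your proof is correct and takes the same route as the paper: a uniform bound $p_\kappa = 1+O(\kappa)$, which you justify explicitly through $c_d(\kappa_j)^{-1}\in[e^{-\kappa},e^{\kappa}]$ where the paper only asserts it, then integration against $v$ over $\mathcal{V}_\ell$, then the inequality $\|x/\|x\|_2-y/\|y\|_2\|_2\leq 2\|x-y\|_2/\|y\|_2$. Your closing remark about non-uniformity in $K$ is wrong, though: the paper keeps the factor $\sigma(\mathcal{V}_\ell)$ that you discard and gets $O(\kappa\,\sigma(\mathcal{V}_\ell)/\|m_\ell(0)\|_2)$, and since $\|m_\ell(0)\|_2\geq\langle m_\ell(0),\mu_\ell\rangle\geq\cos(R_\ell)\,\sigma(\mathcal{V}_\ell)$, with $R_\ell$ the cell radius and $R_\ell<\pi/2$, both this constant and the smallness threshold on $\kappa$ stay uniform as the cells shrink, so the dependence on $K$ you flag comes only from bounding $\sigma(\mathcal{V}_\ell)\leq 1$.
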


\begin{proof}
For small $\kappa$, the vMF density~\eqref{eqn:vmf_density} admits the uniform expansion $f_{\mathrm{vMF}}(v;\theta,\kappa) = 1+O(\kappa)$, uniformly over $v,\theta\in \mathbb{S}^{d-1}$. Since $p_\kappa$ is a mixture satisfying $\max_j\kappa_j\leq\kappa$, it follows that $p_\kappa(v) = 1+O(\kappa)$ uniformly over $v\in \mathbb{S}^{d-1}$. Therefore,
\begin{align}
    m_\ell(\kappa)-m_\ell(0) &= \int_{\mathcal{V}_\ell} v\bigl(p_\kappa(v)-1\bigr)\,d\sigma(v),
\end{align}
and hence
\begin{align}
    \|m_\ell(\kappa)-m_\ell(0)\|_2  &\leq \int_{\mathcal{V}_\ell}  \bigl|p_\kappa(v)-1\bigr|\,d\sigma(v) \\
    &=  O(\kappa\,\sigma(\mathcal{V}_\ell)).
\end{align}

The bound for the normalized updates follows from
\begin{align}
    \left\| \frac{x}{\|x\|_2} - \frac{y}{\|y\|_2} \right\|_2 \leq \frac{2\|x-y\|_2}{\|y\|_2},
\end{align}
valid for $y\neq0$ and $x$ sufficiently close to $y$. Applying this inequality with $x=m_\ell(\kappa)$ and $y=m_\ell(0)$ gives
\begin{align}
    \|\mu_\ell^\star(\kappa)-\mu_\ell^\star(0)\|_2 = O\left( \frac{\kappa\,\sigma(\mathcal{V}_\ell)} {\|m_\ell(0)\|_2} \right),
\end{align}
which proves the result.
\end{proof}

\subsection{Explicit population dynamics on the circle}
\label{sec:circle_population_dynamics}

The case $d=2$ gives a closed-form description of the population dynamics under the uniform model. We identify $\mathbb{S}^1$ with complex unit vectors and write $\mu_\ell(t)=e^{\mathrm{i}\theta_\ell(t)}$, for $\ell=0,\ldots,K-1$. At each iteration, the angles are ordered cyclically as
\begin{align}
    0\leq \theta_0(t)<\theta_1(t)<\cdots<\theta_{K-1}(t)<2\pi .
\end{align}

Under the uniform model, the Voronoi cell of $\mu_\ell(t)$ is the arc whose endpoints are the midpoints between neighboring templates, i.e., if
\begin{align}
    g_\ell(t) = \theta_{\ell+1}(t)-\theta_\ell(t)
    \label{eqn:cyclic_gaps}
\end{align}
denotes the cyclic gap, then the cell of $\mu_\ell(t)$ is the arc
\begin{align}
    \left[ \theta_\ell(t)-\frac{g_{\ell-1}(t)}{2}, \theta_\ell(t)+\frac{g_\ell(t)}{2} \right].
\end{align}
The normalized centroid of an arc points in the direction of its angular midpoint. Therefore, one population spherical $K$-means update satisfies
\begin{align}
    \theta_\ell(t+1)
    &=
    \theta_\ell(t) + \frac{g_\ell(t)-g_{\ell-1}(t)}{4} \nonumber\\
    &= \frac{ \theta_{\ell-1}(t) + 2\theta_\ell(t) + \theta_{\ell+1}(t)}{4}.  \label{eq:s1_theta_update}
\end{align}
Thus, on the circle, the population spherical $K$-means map acts as a local averaging operator on neighboring template angles. Combining~\eqref{eq:s1_theta_update} with the definition of the cyclic gaps in~\eqref{eqn:cyclic_gaps} gives the following discrete gap dynamics.

\begin{proposition}[Population gap dynamics on $\mathbb{S}^1$]
\label{prop:s1_gap_dynamics}
Under the uniform model on $\mathbb{S}^1$, the cyclic gaps $g_\ell(t)$~\eqref{eqn:cyclic_gaps} evolve according to
\begin{align}
    g_\ell(t+1) = \frac{ g_{\ell-1}(t) + 2g_\ell(t) + g_{\ell+1}(t) }{4}.
    \label{eq:s1_gap_update}
\end{align}
\end{proposition}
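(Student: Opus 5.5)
The plan is to take the angle recursion~\eqref{eq:s1_theta_update} as given and subtract consecutive instances of it. By the definition~\eqref{eqn:cyclic_gaps}, $g_\ell(t+1)=\theta_{\ell+1}(t+1)-\theta_\ell(t+1)$. I will substitute the three-point averaging formula for both $\theta_{\ell+1}(t+1)$ and $\theta_\ell(t+1)$. This gives
\begin{align}
    g_\ell(t+1) = \frac{(\theta_\ell-\theta_{\ell-1}) + 2(\theta_{\ell+1}-\theta_\ell) + (\theta_{\ell+2}-\theta_{\ell+1})}{4},
\end{align}
where all angles are evaluated at time $t$. Regrouping the terms as $g_{\ell-1}(t)+2g_\ell(t)+g_{\ell+1}(t)$ yields~\eqref{eq:s1_gap_update}. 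The computation is linear and uses the same stencil for every $\ell$.

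The main point requiring care is cyclic indexing and the wrap-around at $2\pi$. I will work with lifted angles on $\mathbb{R}$, setting $\theta_{\ell+K}=\theta_\ell+2\pi$. With this convention, the gaps are $K$-periodic, $g_{K-1}=\theta_0+2\pi-\theta_{K-1}$, and $\sum_\ell g_\ell(t)=2\pi$. I then need to check that~\eqref{eq:s1_theta_update} is consistent with the lift. This holds because the averaging stencil has weights summing to one, so shifting every angle by $2\pi$ shifts the update by $2\pi$, and the constants cancel in differences.

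I also need to verify that the cyclic order is preserved, so that the labels at time $t+1$ still index consecutive templates and the gap at time $t+1$ refers to the same neighbor pair. This follows from the new gap formula itself. If all $g_\ell(t)>0$, then $g_\ell(t+1)$ is a positive combination of positive gaps, so it is positive. Induction on $t$ then justifies applying~\eqref{eq:s1_theta_update} at every step.

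Finally, a by-product of the formula is that $\sum_\ell g_\ell(t+1)=\sum_\ell g_\ell(t)=2\pi$, which confirms that the map is mass-preserving, as a discrete heat equation should be.
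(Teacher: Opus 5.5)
Your proposal is correct and follows the paper's derivation. The paper obtains the gap law by substituting the angle recursion~\eqref{eq:s1_theta_update} into the gap definition~\eqref{eqn:cyclic_gaps}, and your difference computation is exactly that step. The lifted-angle convention $\theta_{\ell+K}=\theta_\ell+2\pi$ and the induction showing that gaps stay positive are extra bookkeeping, which the paper handles implicitly by re-sorting the angles at each iteration. They make the wrap-around and the cyclic indexing rigorous.
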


Consequently, the gap vector follows a discrete heat equation on the cycle. Let $\bar{g}=2\pi/K$ be the uniform gap size. The update in~\eqref{eq:s1_gap_update} is a circulant averaging operator. Its Fourier eigenvalues are
\begin{align}
    \lambda_q = \frac{1}{2} + \frac{1}{2} \cos\left(\frac{2\pi q}{K}\right) = \cos^2\left(\frac{\pi q}{K}\right),
\end{align}
for $q=0,\ldots,K-1$. The eigenvalue $\lambda_0=1$ preserves the total length $\sum_{\ell=0}^{K-1}g_\ell(t)=2\pi$. All other modes contract. In particular,
\begin{align}
    \|g(t)-\bar{g}\mathbf{1}\|_2 \leq \cos^{2t}\left(\frac{\pi}{K}\right) \|g(0)-\bar{g}\mathbf{1}\|_2 . \label{eq:s1_gap_contraction}
\end{align}
Since $\cos^2\left(\frac{\pi}{K}\right) = 1-\Theta(K^{-2})$ as $K \to \infty$, the relaxation time of the population dynamics is of order $K^2$.

This explicit dynamics has two consequences. First, as $t\to\infty$, the gaps converge to the uniform gap vector $\bar{g}\mathbf{1}$,  that is, the templates converge to a regular $K$-gon, up to a global rotation. Indeed, any fixed point of~\eqref{eq:s1_gap_update} must satisfy $g_{\ell-1}-2g_\ell+g_{\ell+1}=0$ on the cycle, whose only cyclic solutions are constant. Thus, the only nondegenerate population fixed points on $\mathbb{S}^1$ are the uniformly spaced configurations.

Second, convergence to this regular configuration can be very slow when $K$ is large. For iteration numbers $t\ll K^2$, the contraction factor in~\eqref{eq:s1_gap_contraction} is close to one, so the gap pattern remains close to its initialized configuration. Thus, under the uniform model, spherical $K$-means does not immediately erase the initialization. Rather, it gradually regularizes the initialization-induced Voronoi tessellation through a heat-flow-like dynamics. This provides an explicit multi-iteration explanation for the persistence of initialization-induced structure in the many-template regime.

\subsection{Future work}

Several directions remain open. A natural next step is to study soft-assignment methods, particularly expectation-maximization (EM) for vMF mixtures, and determine whether they weaken or amplify initialization bias relative to spherical $K$-means. The corresponding Gaussian problem was analyzed in~\cite{balanov2025confirmation}, providing a natural starting point for the directional setting. A second direction is to develop finite-sample theory quantifying when the empirical iterates track their population counterparts as $n$, $K$, and $d$ vary. Finally, extending the explicit $\mathbb{S}^1$ population dynamics to higher-dimensional spheres could clarify the long-term behavior of repeated updates, including their fixed points and convergence rates, as well as the connection between heat-flow-like dynamics and weak-signal initialization bias.

\bibliographystyle{IEEEbib}

\begin{appendix}
\section{Radius of a  spherical Voronoi cell}

The next proposition quantifies the size of a typical spherical Voronoi cell generated by $K$ independent uniform templates. It shows that the cell radius has an exponentially decaying tail at the natural scale $K^{-1/(d-1)}$, and consequently that its mean squared radius is of order $K^{-2/(d-1)}$.

\begin{proposition}[Radius of a typical spherical Voronoi cell]
\label{prop:typical_cell_radius}
Fix $d\geq 2$, and let $\mu_0,\ldots,\mu_{K-1}\stackrel{\mathrm{i.i.d.}}{\sim} \mathrm{Unif}(\mathbb{S}^{d-1})$. Let $\mathcal{V}_0$ be the spherical Voronoi cell generated by $\mu_0$, and define $R_0 \triangleq \sup_{v\in\mathcal{V}_0} d_S(v,\mu_0)$. Then there exist constants $c_d,C_d,r_d>0$, depending only on $d$, such that, for every $K\geq2$ and $0<r\leq r_d$,
\begin{align}
    \mathbb{P}(R_0>r) \leq C_d\exp\!\left(-c_dKr^{d-1}\right).    \label{eq:typical_cell_radius_tail}
\end{align}
Consequently, as $K\to\infty$ with $d$ fixed,
\begin{align}
    \mathbb{E}[R_0^2] = O\!\left(K^{-\frac{2}{d-1}}\right).    \label{eq:typical_cell_radius_second_moment}
\end{align}
\end{proposition}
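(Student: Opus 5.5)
Conditionally on $\mu_0$, the remaining templates $\mu_1,\ldots,\mu_{K-1}$ are i.i.d.\ uniform. By rotation invariance we may fix $\mu_0$. The plan has three steps: reduce the event $\{R_0>r\}$ to a covering failure near $\mu_0$, bound the probability of that failure by a union over a finite net of caps, and then integrate the tail to get the second moment.

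\textbf{Reduction to a covering failure.} Suppose $R_0>r$. Then there is a point $v\in\mathcal{V}_0$ with $d_S(v,\mu_0)>r$. Since $v\in\mathcal{V}_0$, every other template is at least as far from $v$ as $\mu_0$ is, so $d_S(v,\mu_k)\geq d_S(v,\mu_0)$ for all $k$. The cap $B(v,d_S(v,\mu_0))$ therefore contains no template in its interior. This cap contains a cap of radius $r/2$, for instance the one centered at the point $w$ on the geodesic from $\mu_0$ to $v$ with $d_S(w,\mu_0)=r/2$. The conclusion holds in every case: when $R_0>r$, some cap of radius $r/2$ whose center lies at distance exactly $r/2$ from $\mu_0$ contains none of $\mu_1,\ldots,\mu_{K-1}$ in its interior.

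\textbf{Union bound over a net.} We discretize the possible centers. The centers $w$ lie on the geodesic sphere of radius $r/2$ about $\mu_0$. Take a $(r/8)$-net $\{w_1,\ldots,w_N\}$ of this sphere. Because the ratio $(r/2)/(r/8)$ is constant, the number of net points satisfies $N\leq N_d$ uniformly in $r\leq r_d$, for example by a volume-doubling argument on the sphere. If $B(w,r/2)$ is template-free, then for the net point $w_i$ nearest to $w$ we have $B(w_i,r/4)\subset B(w,r/2)$, so $B(w_i,r/4)$ is also template-free. For $r\leq r_d$ (say $r_d\leq\pi/2$), the cap volume satisfies $\sigma(B(\cdot,r/4))\geq a_d r^{d-1}$, using $\sin t\geq 2t/\pi$. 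Hence
\[
\mathbb{P}(R_0>r)\leq N_d\,(1-a_dr^{d-1})^{K-1}\leq N_d\exp(-a_d(K-1)r^{d-1}).
\]
Since $K-1\geq K/2$ for $K\geq2$, this gives~\eqref{eq:typical_cell_radius_tail} with $c_d=a_d/2$ and $C_d=N_d$.

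\textbf{Second moment.} We write $\mathbb{E}[R_0^2]=\int_0^\pi 2r\,\mathbb{P}(R_0>r)\,dr$ and split the range at $r_d$.
\begin{itemize}
\item On $(0,r_d]$, we substitute $s=Kr^{d-1}$. The integral becomes $K^{-2/(d-1)}$ times $\int_0^\infty C_d\,\tfrac{2}{d-1}\,s^{2/(d-1)-1}e^{-c_ds}\,ds$, and this integral is a finite constant.
\item On $(r_d,\pi]$, we use monotonicity, $\mathbb{P}(R_0>r)\leq\mathbb{P}(R_0>r_d)\leq C_de^{-c_dKr_d^{d-1}}$. This piece is exponentially small in $K$, hence $o(K^{-2/(d-1)})$.
\end{itemize}
Adding the two pieces yields~\eqref{eq:typical_cell_radius_second_moment}.

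\textbf{Main obstacle.} The step needing most care is the geometric reduction: showing that $v\in\mathcal{V}_0$ far from $\mu_0$ forces a template-free cap of radius comparable to $r$, located at a controlled position. This is what makes a net of bounded cardinality suffice. The containment $B(w,r/2)\subset B(v,d_S(v,\mu_0))$ follows from the triangle inequality along the geodesic, since $w$ lies on the segment from $\mu_0$ to $v$ and the larger cap has $\mu_0$ on its boundary. The remaining steps are routine.
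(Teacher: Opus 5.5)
Your proof is correct, but its key geometric step differs from the paper's. The paper anchors its empty regions at $\mu_0$. It fixes a $\pi/12$-net of unit tangent directions at $\mu_0$ and covers the punctured ball $\mathcal{B}_S(\mu_0,r)\setminus\{\mu_0\}$ by the truncated sectors $\mathcal{C}_j(r)$. It then proves a directional-blocking lemma from the identity $\langle v,\mu_k\rangle=\cos t\cos s+\sin t\sin s\cos\theta$: a competing template in $\mathcal{C}_j(r)$, with $A_j$ containing the tangent direction of $v$, is strictly closer to $v$ than $\mu_0$ is. That step needs a half-angle tangent inequality and a case split at $t=\pi/2$. You instead use the empty-ball property of the Voronoi cell: $v\in\mathcal{V}_0$ forces the open ball of radius $d_S(v,\mu_0)$ about $v$ to be template-free. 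The triangle inequality along a minimizing geodesic then gives a template-free ball of radius $r/2$ centered on the geodesic sphere of radius $r/2$ about $\mu_0$, and discretizing that sphere at scale $r/8$ leaves finitely many deterministic caps of radius $r/4$.

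Your route avoids trigonometry. It uses only metric facts (minimizing geodesics, the triangle inequality, and two-sided ball-volume bounds of order $r^{d-1}$), so it carries over unchanged to other compact Riemannian manifolds. The paper's route gets the $r$-independent count $N_d$ for free from a fixed net of tangent directions. Yours needs a bound, uniform in $r$, on the size of an $r/8$-net of a sphere of radius $r/2$. Your appeal to ``volume doubling'' there is right but should be made explicit. For example, pack disjoint balls of radius $r/16$ inside $\mathcal{B}_S(\mu_0,9r/16)$. Alternatively, take $w_i=\operatorname{Exp}_{\mu_0}((r/2)\xi_i)$ over a fixed fine net $\{\xi_i\}$ of tangent directions.

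The tail-integration step for $\mathbb{E}[R_0^2]$ is the same in both proofs.
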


\subsection{Notation and definitions}
Before proving Proposition~\ref{prop:typical_cell_radius}, we introduce the geometric construction used in the argument. For $\mu\in\mathbb{S}^{d-1}$, let
\begin{align}
    \mathbb{S}_{\mu}^{d-2} \triangleq \left\{\eta\in\mathbb{S}^{d-1}: \langle\eta,\mu\rangle=0\right\}
\end{align}
denote the unit tangent sphere at $\mu$. For $\eta\in\mathbb{S}_{\mu}^{d-2}$, the Riemannian exponential map is
\begin{align}
    \operatorname{Exp}_{\mu}(t\eta) = \cos(t)\mu+\sin(t)\eta,
    \qquad  0\leq t\leq\pi.  \label{eq:spherical_exponential_map}
\end{align}
Thus, a point at geodesic distance $t$ from $\mu$ can be represented by a tangent direction $\eta$ and a radial distance $t$; the tangent direction is unique for $0<t<\pi$ and nonunique at the antipode $t=\pi$. 

We adapt the finite-direction argument used in the proof of~\cite[Lemma~5.1]{penrose2007laws}. Condition on $\mu_0$ and fix $\varepsilon=\pi/12$. Choose a finite $\varepsilon$-net $\{\xi_j\}_{j=1}^{N_d}\subset\mathbb{S}_{\mu_0}^{d-2}$, where $N_d$ depends only on $d$. Thus, the caps
\begin{align}
    A_j \triangleq \left\{\eta\in\mathbb{S}_{\mu_0}^{d-2}: d_S(\eta,\xi_j)\leq\varepsilon \right\},
\end{align}
for $ j=1,\ldots,N_d$, cover $\mathbb{S}_{\mu_0}^{d-2}$. 

For $r>0$, define the corresponding truncated geodesic sectors
\begin{align}
    \mathcal{C}_j(r) \triangleq \left\{\operatorname{Exp}_{\mu_0}(s\eta): 0<s\leq r,\; \eta\in A_j\right\} \subseteq \mathbb{S}^{d-1}.    \label{eq:def_truncated_spherical_sector}
\end{align}
Each $\mathcal{C}_j(r)$ is contained in the closed geodesic ball
\begin{align}
    \mathcal{B}_S(\mu_0,r) \triangleq \left\{ v\in\mathbb{S}^{d-1}: d_S(v,\mu_0)\leq r \right\}
\end{align}
Moreover, since the caps $A_j$ cover all tangent directions at $\mu_0$, for $0<r<\pi$ the corresponding sectors collectively cover the entire punctured geodesic ball:
\begin{align}
    \bigcup_{j=1}^{N_d}\mathcal{C}_j(r) = \mathcal{B}_S(\mu_0,r)\setminus\{\mu_0\}.    \label{eq:sectors_cover_geodesic_ball}
\end{align}
Thus, the finite collection of sectors provides a directional decomposition of the geodesic neighborhood of $\mu_0$.

Finally, let
\begin{align}
    E_j(r) \triangleq \left\{ \mathcal{C}_j(r) \cap \{\mu_1,\ldots,\mu_{K-1}\} = \varnothing \right\}
\end{align}
be the event that the $j$th sector contains no competing template within distance $r$.

\subsection{Auxiliary results}
The next lemma quantifies the size of each truncated directional sector and, in turn, the probability that it contains no competing template. Its key point is that a sector of geodesic radius $r$ has surface measure of order $r^{d-1}$, so the probability that all $K-1$ competing templates miss it decays exponentially in $Kr^{d-1}$.

\begin{lem}
\label{lem:spherical_sector_mass}
Let $\sigma$ denote normalized surface measure on $\mathbb{S}^{d-1}$. There exist constants $a_d,r_d>0$, depending only on $d$, such that for every $0<r\leq r_d$,
\begin{align}
    \sigma\!\left(\mathcal{C}_j(r)\right) \geq a_d r^{d-1},    \label{eq:spherical_sector_measure}
\end{align}
uniformly in $\mu_0$ and $j$. Consequently, for $ 0<r\leq r_d$,
\begin{align}
    \mathbb{P}\!\left(E_j(r)\vert\mu_0\right) \leq  \exp\!\left(-a_d(K-1)r^{d-1}\right).    \label{eq:empty_sector_tail}
\end{align}
\end{lem}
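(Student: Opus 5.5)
The plan has two parts: first a lower bound on the measure of $\mathcal{C}_j(r)$ using geodesic polar coordinates at $\mu_0$, and then the empty-sector tail from independence of the competing templates.

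\textbf{Measure of the sector.} By rotation invariance of $\sigma$, the quantity $\sigma(\mathcal{C}_j(r))$ does not depend on $\mu_0$. For a fixed $\mu_0$ it also does not depend on $j$ beyond the cap $A_j$. All caps $A_j$ are geodesic caps of the same radius $\varepsilon=\pi/12$ in $\mathbb{S}^{d-2}_{\mu_0}$, so they are congruent under rotations fixing $\mu_0$. Hence $\sigma(\mathcal{C}_j(r))$ is independent of both $\mu_0$ and $j$.

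To compute it, I would use the polar-coordinate formula for the exponential map \eqref{eq:spherical_exponential_map}. The map $(s,\eta)\mapsto \operatorname{Exp}_{\mu_0}(s\eta)$ is a diffeomorphism from $(0,\pi)\times\mathbb{S}^{d-2}_{\mu_0}$ onto $\mathbb{S}^{d-1}\setminus\{\pm\mu_0\}$. Its Jacobian is $\sin^{d-2}(s)$ with respect to the product of $ds$ and the unnormalized surface measure on $\mathbb{S}^{d-2}$. Therefore
\begin{align}
\sigma(\mathcal{C}_j(r)) = \frac{\omega_{d-2}(A_j)}{\omega_{d-1}}\int_0^r \sin^{d-2}(s)\,ds ,
\end{align}
where $\omega_{d-1}$ is the total unnormalized surface area of $\mathbb{S}^{d-1}$ and $\omega_{d-2}(A_j)$ is the unnormalized measure of the cap $A_j$. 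The prefactor is a positive constant depending only on $d$ and $\varepsilon$. For $d=2$ the cap $A_j$ is a pair of points or a single point with counting measure, and the formula still holds with $\sin^0=1$.

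Next, take $r_d=\pi/2$. On $[0,\pi/2]$ we have $\sin s\ge 2s/\pi$, which gives
\begin{align}
\int_0^r\sin^{d-2}(s)\,ds \;\ge\; \left(\frac{2}{\pi}\right)^{d-2}\frac{r^{d-1}}{d-1}.
\end{align}
Combining the two displays yields \eqref{eq:spherical_sector_measure} with an explicit constant $a_d$. The only point needing care is the Jacobian of the exponential map. I would justify it by noting that the metric in these coordinates is $ds^2+\sin^2(s)\,g_{\mathbb{S}^{d-2}}$, so this step is routine.

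\textbf{Tail bound for an empty sector.} Conditional on $\mu_0$, the templates $\mu_1,\ldots,\mu_{K-1}$ are i.i.d. uniform and independent of $\mu_0$. The sector $\mathcal{C}_j(r)$ is determined by $\mu_0$ (together with a fixed, measurable choice of the net), so each competing template avoids it with probability $1-\sigma(\mathcal{C}_j(r))$. Hence
\begin{align}
\mathbb{P}(E_j(r)\mid\mu_0) = \bigl(1-\sigma(\mathcal{C}_j(r))\bigr)^{K-1} \le \exp\!\bigl(-(K-1)\sigma(\mathcal{C}_j(r))\bigr) \le \exp\!\bigl(-a_d(K-1)r^{d-1}\bigr),
\end{align}
using $1-x\le e^{-x}$ and then the measure bound. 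This proves \eqref{eq:empty_sector_tail}.
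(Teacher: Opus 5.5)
Your proof is correct and follows the same route as the paper. You use geodesic polar coordinates at $\mu_0$ with volume element $\sin^{d-2}(s)\,ds$ to get a lower bound of order $r^{d-1}$, and then $\bigl(1-\sigma(\mathcal{C}_j(r))\bigr)^{K-1}\le\exp\bigl(-(K-1)\sigma(\mathcal{C}_j(r))\bigr)$ by conditional independence; you simply make the paper's ``$\sin(s)\asymp s$'' step explicit ($r_d=\pi/2$ with Jordan's inequality) and justify the uniformity in $\mu_0$ and $j$ by rotation invariance, which the paper asserts without comment.
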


\begin{proof}
Since $A_j$ has fixed angular radius $\varepsilon$, its measure on the unit tangent sphere is a positive constant depending only on $d$. In geodesic polar coordinates centered at $\mu_0$, the radial surface element is proportional to $\sin^{d-2}(s)\,ds$. Since $\sin(s)\asymp s$ uniformly for sufficiently small $s$, integration over $A_j$ and $0<s\leq r$ gives \eqref{eq:spherical_sector_measure}.

Conditionally on $\mu_0$, the templates $\mu_1,\ldots,\mu_{K-1}$ remain independent and uniformly distributed on $\mathbb{S}^{d-1}$, while $\mathcal{C}_j(r)$ is fixed. Since $E_j(r)$ is the event that none of these $K-1$ templates falls inside $\mathcal{C}_j(r)$, for every $k=1,\ldots,K-1$,
\begin{align}
    \mathbb{P}\!\left( \mu_k\notin\mathcal{C}_j(r) \vert \mu_0 \right) = 1-\sigma\!\left(\mathcal{C}_j(r)\right).
\end{align}
Therefore, by conditional independence,
\begin{align}
    \mathbb{P}\!\left(E_j(r)\mid\mu_0\right)
    &=
    \left(1-\sigma\!\left(\mathcal{C}_j(r)\right)\right)^{K-1} \\
    &\leq
    \exp\!\left(-(K-1)\sigma\!\left(\mathcal{C}_j(r)\right) \right),
\end{align}
where we used $1-x\leq e^{-x}$. Equation~\eqref{eq:spherical_sector_measure} then yields \eqref{eq:empty_sector_tail}.
\end{proof}

The next lemma formalizes the geometric observation that if the Voronoi cell of $\mu_0$ reaches beyond geodesic distance $r$, then at least one of the finitely many directional sectors must contain no competing template within distance $r$ of $\mu_0$.

\begin{lem}[Directional blocking and the empty-sector criterion]
\label{lem:large_cell_empty_sector}
For every $0 < r < \pi$,
\begin{align}
    \{R_0>r\} \subseteq  \bigcup_{j=1}^{N_d} E_j(r). \label{eq:radius_event_inclusion}
\end{align}
\end{lem}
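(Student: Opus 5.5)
The plan is to prove the contrapositive. I will assume that none of the events $E_j(r)$ occurs, so every sector $\mathcal{C}_j(r)$ contains a competing template. From this I will show that every point of $\mathcal{V}_0$ lies within geodesic distance $r$ of $\mu_0$, i.e.\ $R_0\leq r$. The argument has two parts: a reduction to points just beyond distance $r$, and an explicit inner-product comparison.

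\textbf{Reduction step.} Suppose $R_0>r$, and pick $v\in\mathcal{V}_0$ with $d_S(v,\mu_0)>r$. The cell $\mathcal{V}_0=\bigcap_{k\neq0}\{v:\langle v,\mu_0-\mu_k\rangle\geq0\}$ is an intersection of closed hemispheres, and each of them contains $\mu_0$ in its interior. I will use two facts.
\begin{itemize}
\item If $d_S(v,\mu_0)<\pi$, the minimizing geodesic from $\mu_0$ to $v$ is unique. Along it, $\langle\cdot,\mu_0-\mu_k\rangle$ has the form $\alpha\cos s+\beta\sin s$, which cannot change sign on a segment of length less than $\pi$ if it is nonnegative at both ends. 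So the whole geodesic stays in $\mathcal{V}_0$.
\item If $v=-\mu_0$, then $\langle v,\mu_0-\mu_k\rangle\geq0$ gives $\langle\mu_0,\mu_k\rangle\leq-1$. This forces $\mu_k=-\mu_0$, which is excluded almost surely for i.i.d.\ uniform templates. Alternatively, it already contradicts the assumption that every sector contains a template at distance at most $r<\pi$.
\end{itemize}
Hence there is a point $v=\operatorname{Exp}_{\mu_0}(t\eta)\in\mathcal{V}_0$ with $t\in(r,\pi)$, and $t$ can be taken arbitrarily close to $r$. Here $\eta\in\mathbb{S}^{d-2}_{\mu_0}$ lies in some cap $A_j$.

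\textbf{Comparison step.} By assumption, some template $\mu_k=\operatorname{Exp}_{\mu_0}(s\eta')$ lies in $\mathcal{C}_j(r)$, with $0<s\leq r<t$ and $\eta'\in A_j$. Both $\eta$ and $\eta'$ are within $\varepsilon$ of $\xi_j$, so $\langle\eta,\eta'\rangle\geq\cos(2\varepsilon)=\cos(\pi/6)=:c>0$. By \eqref{eq:spherical_exponential_map} and $\langle\mu_0,\eta\rangle=\langle\mu_0,\eta'\rangle=0$,
\begin{align}
\langle v,\mu_k\rangle-\langle v,\mu_0\rangle \geq \sin t\,\sin s\; c-\cos t\,(1-\cos s).
\end{align}
I need this to be strictly positive, since then $v\notin\mathcal{V}_0$, which is the desired contradiction.
\begin{itemize}
\item If $\cos t\leq 0$, positivity is immediate because $\sin t,\sin s>0$.
\item If $\cos t>0$, then $s<t<\pi/2$, and positivity is equivalent to $c\tan t>\tan(s/2)$. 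This holds because $\tan t>\tan s\geq 2\tan(s/2)$ and $2c=\sqrt3>1$.
\end{itemize}

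The main obstacle is the reduction step. The inner-product inequality can fail for $t$ close to $\pi$ when $s$ is much smaller than $t$. That is why I first move $v$ back along the geodesic to a point whose distance $t$ exceeds $r$ only slightly. With $t$ close to $r\geq s$, the case analysis above applies directly. The choice $\varepsilon=\pi/12$ is used only through $\cos(2\varepsilon)>1/2$.
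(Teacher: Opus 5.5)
Your proof is correct, and its core is the paper's argument. Both argue by contrapositive, pick the cap $A_j$ containing the direction of $v$, use the template guaranteed in $\mathcal{C}_j(r)$, and show $\langle v,\mu_k\rangle>\langle v,\mu_0\rangle$ by splitting on the sign of $\cos t$. Your bound $\tan t>\tan s\ge 2\tan(s/2)$ replaces the paper's $\tan(s/2)<\tan(t/2)<\tfrac12\tan t$, which is an immaterial variation.

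The reduction step, however, is unnecessary, and the "main obstacle" behind it does not exist. For $t$ near $\pi$ we have $\cos t<0$, so $-\cos t\,(1-\cos s)>0$ and the first term is nonnegative; this is exactly your own first case. Even at the antipode $t=\pi$, where $\sin t=0$, the difference equals $1-\cos s>0$. This is how the paper handles all $t\ge\pi/2$ at once, with no star-shapedness argument and no need to move $v$ toward distance $r$. Indeed, your case analysis never uses that $t$ is close to $r$; it only uses $0<s<t<\pi$.

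Your antipode computation also has a sign slip. Since $\langle-\mu_0,\mu_0-\mu_k\rangle=-1+\langle\mu_0,\mu_k\rangle$, the condition $-\mu_0\in\mathcal{V}_0$ forces $\langle\mu_0,\mu_k\rangle\ge1$, i.e.\ $\mu_k=\mu_0$ for every $k$, not $\mu_k=-\mu_0$. The conclusion $-\mu_0\notin\mathcal{V}_0$ still holds, because the hypothesis supplies a template at distance $s\in(0,r]$ from $\mu_0$.
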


\begin{proof}
Suppose that none of the events $E_j(r)$ occurs. Assume, toward a contradiction, that $R_0>r$. Then there exists $v\in\mathcal{V}_0$ with $t\triangleq d_S(v,\mu_0)>r$. Write
\begin{align}
    v = \operatorname{Exp}_{\mu_0}(t\eta_v)
    \label{eqn:def_v}
\end{align}
for some $\eta_v\in\mathbb{S}_{\mu_0}^{d-2}$. Since the caps $A_j$ cover the unit tangent sphere, there exists $j$ such that $\eta_v\in A_j$.

Because $E_j(r)$ does not occur, the sector $\mathcal{C}_j(r)$ contains a competing template, say
\begin{align}
    \mu_k = \operatorname{Exp}_{\mu_0}(s\eta_k) \in \mathcal{C}_j(r),
    \label{eqn:def_mu_k}
\end{align}
where $0<s\leq r<t$ and $\eta_k\in A_j$. Since both $\eta_v$ and $\eta_k$ belong to $A_j$,
\begin{align}
    d_S(\eta_v,\eta_k) &\leq d_S(\eta_v,\xi_j) + d_S(\xi_j,\eta_k) \\
    &\leq 2\varepsilon = \frac{\pi}{6}.    \label{eq:sector_direction_separation}
\end{align}

We now show that the closer template $\mu_k$, lying in approximately the same direction from $\mu_0$ as $v$, is closer to $v$ than $\mu_0$ is. Let $\theta\triangleq d_S(\eta_v,\eta_k)\leq\pi/6$. Recall from~\eqref{eqn:def_v} and~\eqref{eqn:def_mu_k} that $v$ and $\mu_k$ are obtained from the same base point $\mu_0$ by moving geodesic distances $t$ and $s$, respectively, along the tangent directions $\eta_v$ and $\eta_k$:
\begin{align}
    v &= \cos(t)\mu_0+\sin(t)\eta_v, \\
    \mu_k &= \cos(s)\mu_0+\sin(s)\eta_k.
\end{align}
Since $\eta_v,\eta_k\in\mathbb{S}_{\mu_0}^{d-2}$, both tangent directions are orthogonal to $\mu_0$, so $\langle\mu_0,\eta_v\rangle = \langle\mu_0,\eta_k\rangle = 0$. Moreover, by the definition of $\theta$, $\langle\eta_v,\eta_k\rangle = \cos(\theta)$. Using $\cos d_S(x,y)=\langle x,y\rangle$ for $x,y\in\mathbb{S}^{d-1}$, we therefore obtain
\begin{align}
    \cos d_S(v,\mu_k) &= \langle v,\mu_k\rangle \\ \notag
    &=
    \left\langle \cos(t)\mu_0+\sin(t)\eta_v,\, \cos(s)\mu_0+\sin(s)\eta_k \right\rangle \\ \notag
    &= \cos(t)\cos(s) + \sin(t)\sin(s)\cos(\theta).    \label{eq:blocking_cosine_vk}
\end{align}
Similarly,
\begin{align}
    \cos d_S(v,\mu_0) = \langle v,\mu_0\rangle = \cos(t),
\end{align}
which is consistent with $d_S(v,\mu_0)=t$ in~\eqref{eqn:def_v}. Subtracting the two identities gives
\begin{align}
    &\cos d_S(v,\mu_k)-\cos d_S(v,\mu_0)
    \notag\\
    &\qquad = \sin(t)\sin(s)\cos(\theta) - \cos(t)\bigl(1-\cos(s)\bigr).    \label{eq:blocking_cosine_difference}
\end{align}

If $t\geq\pi/2$, then $\cos(t)\leq0$, while $\sin(t)\geq0$, $\sin(s)>0$, $\cos(\theta)>0$, and $1-\cos(s)>0$. Hence the right-hand side of \eqref{eq:blocking_cosine_difference} is strictly positive. It remains to consider $t<\pi/2$. Using $0<s<t$ and the monotonicity of $\tan(\cdot)$ on $(0,\pi/2)$,
\begin{align}
    \tan\!\left(\frac{s}{2}\right) < \tan\!\left(\frac{t}{2}\right) <\frac{1}{2}\tan(t) <  \cos(\theta)\tan(t),
    \label{eq:blocking_tangent_inequality}
\end{align}
where the second inequality follows from
\begin{align}
    \tan\!\left(\frac{t}{2}\right) = \frac{\sin(t)}{1+\cos(t)} < \frac{\sin(t)}{2\cos(t)} = \frac{1}{2}\tan(t),
\end{align}
and the last inequality follows from $\cos(\theta)\geq\cos(\pi/6)=\sqrt{3}/2>1/2$. Using
\begin{align}
    \tan\!\left(\frac{s}{2}\right) = \frac{1-\cos(s)}{\sin(s)},
\end{align}
and noting that $\cos(t)>0$ and $\sin(s)>0$, \eqref{eq:blocking_tangent_inequality} is equivalent to
\begin{align}
    \sin(t)\sin(s)\cos(\theta) - \cos(t)\bigl(1-\cos(s)\bigr) > 0.
\end{align}
Therefore, by~\eqref{eq:blocking_cosine_difference}, $\cos d_S(v,\mu_k) > \cos d_S(v,\mu_0)$. Since cosine is strictly decreasing on $[0,\pi]$, it follows that
\begin{align}
    d_S(v,\mu_k) < d_S(v,\mu_0).
\end{align}
This contradicts $v\in\mathcal{V}_0$, since every point in $\mathcal{V}_0$ is at least as close to $\mu_0$ as to any competing template.

Therefore, if none of the events $E_j(r)$ occurs, then $R_0\leq r$. Taking the contrapositive yields
\begin{align}
    \{R_0>r\} \subseteq \bigcup_{j=1}^{N_d}E_j(r),
\end{align}
which proves~\eqref{eq:radius_event_inclusion}.
\end{proof}

\subsection{Proof of Proposition~\ref{prop:typical_cell_radius}}

By Lemma~\ref{lem:large_cell_empty_sector} and the union bound, conditionally on $\mu_0$, for every $0<r\leq r_d$,
\begin{align}
    \mathbb{P}(R_0>r\vert \mu_0)
    &\leq
    \sum_{j=1}^{N_d} \mathbb{P}(E_j(r) \vert \mu_0) \\
    &\leq N_d \exp\!\left(-a_d(K-1)r^{d-1}\right),
\end{align}
where the second inequality follows from Lemma~\ref{lem:spherical_sector_mass}.
Since $K-1\geq K/2$ for $K\geq2$, and the bound is uniform in $\mu_0$, averaging over $\mu_0$ yields, after absorbing constants depending only on $d$,
\begin{align}
    \mathbb{P}(R_0>r) \leq C_d\exp\!\left(-c_dKr^{d-1}\right).
\end{align}
This proves~\eqref{eq:typical_cell_radius_tail}.

For the second-moment bound, since $0\leq R_0\leq\pi$, the tail integration formula gives
\begin{align}
    \mathbb{E}[R_0^2] = 2\int_0^\pi r\,\mathbb{P}(R_0>r)\,dr.
\end{align}
For $r\geq r_d$, monotonicity of the tail probability and \eqref{eq:typical_cell_radius_tail} give
\begin{align}
    \mathbb{P}(R_0>r) \leq \mathbb{P}(R_0>r_d) = O\!\left(e^{-c_d'K}\right)
\end{align}
for some $c_d'>0$.
Therefore,
\begin{align}
    \mathbb{E}[R_0^2] &\leq 2C_d\int_0^{r_d}r\exp\!\left(-c_dKr^{d-1}\right)\,dr + O\!\left(e^{-c_d'K}\right).  \label{eq:typical_cell_second_moment_integral}
\end{align}
With the change of variables
$u=c_dKr^{d-1}$,
\begin{align}
    \notag \int_0^{r_d}
    r\exp\!\left(-c_dKr^{d-1}\right)\,dr
    &= \frac{(c_dK)^{-\frac{2}{d-1}}}{d-1} \int_0^{c_dKr_d^{d-1}} u^{\frac{2}{d-1}-1}e^{-u}\,du \\
    &= O\!\left(K^{-\frac{2}{d-1}}\right).
\end{align}
Since the exponentially small remainder is negligible, we conclude that
\begin{align}
    \mathbb{E}[R_0^2] = O\!\left(K^{-\frac{2}{d-1}}\right),
\end{align}
which proves~\eqref{eq:typical_cell_radius_second_moment}.

\end{appendix}

\end{document}